\documentclass{IEEEtran}
\usepackage{amsmath}
\usepackage{latexsym}
\usepackage{enumerate}
\usepackage{amssymb}
\usepackage{algorithm2e}
\ifx\pdfoutput\undefined
% we are running LaTeX, not pdflatex
\usepackage{graphicx}
\else
% we are running pdflatex, so convert .eps files to .pdf
\usepackage[pdftex]{graphicx}
\usepackage{flushend}
\usepackage{subfigure}
\usepackage[T1]{fontenc}
\usepackage{epstopdf}
\fi
\usepackage{cite}
\newtheorem{theorem}{Theorem}

\abovedisplayskip=0pt
\abovedisplayshortskip=0pt

\title{Influence Spread in Social Networks: A Study via a Fluid Limit of the Linear Threshold Model}
\begin{document}
\author{Srinivasan~Venkatramanan,~\IEEEmembership{Student Member,~IEEE,}
        Anurag~Kumar,~\IEEEmembership{Fellow,~IEEE,}\protect\\
        Department of Electrical Communication Engineering, Indian Institute of Science, \protect\\ Bangalore - 560012, India.\protect\\
        E-mail: vsrini,anurag@ece.iisc.ernet.in
}

\maketitle
\thispagestyle{empty}
\pagestyle{empty}
\begin{abstract}
Threshold based models have been widely used in characterizing collective behavior on social networks. An individual's threshold indicates the minimum level of ``influence'' that must be exerted, by other members of the population engaged in some activity, before the individual will join the activity. In this work, we begin with a homogeneous version of the Linear Threshold model proposed by Kempe et al. \cite{kempe-etal03max-spread-infl} in the context of viral marketing, and generalize this model to arbitrary threshold distributions. We show that the evolution can be modeled as a discrete time Markov chain, and, by using a certain scaling, we obtain a fluid limit that provides an ordinary differential equation model (o.d.e.). We find that the threshold distribution appears in the  o.d.e.\ via its \emph{hazard rate} function. We demonstrate the accuracy of the o.d.e. approximation and derive explicit expressions for the trajectory of influence under the uniform threshold distribution. Also, for an 
exponentially distributed threshold, we show that the fluid dynamics are equivalent to the well-known SIR model in epidemiology. We also numerically study how other hazard functions (obtained from the Weibull and loglogistic distributions) provide qualitative different characteristics of the influence evolution, compared to traditional epidemic models, even in a homogeneous setting. We finally show how the model can be extended to a setting with multiple communities and conclude with possible future directions. 
\end{abstract}

\begin{keywords}
influence spread, threshold models, fluid limits, SIR epidemic, hazard rate
\end{keywords}

\section{Introduction}
\label{sec:intro}
Social networks play a fundamental role in the spread of information, ideas and influence among its members.  The study of influence spread as a stochastic process has been of interest to sociologists for several decades \cite{bartholomew67stochastic}. Such diffusion processes have been used to characterize collective behavior \cite{granovetter78threshold-models}, adoption of innovations \cite{everett62diffusion-innovation, valente96social-thresholds}, etc. among a population of users. Similar models have also been developed independently in other domains to study epidemics\cite{bailey75math-epidemiology}, synchronization in biological systems\cite{mirollo90sync}, etc.  

Online social networks such as Facebook, and Twitter, with their widespread adoption, have enabled information spread on a scale heretofore unimaginable. A significant fraction of online traffic comprises user-generated content on platforms such as Wordpress (text), Flickr (images), YouTube (videos), etc. In most such platforms, we see that users can obtain information on the global popularity of an item of content, for instance the number of views for a YouTube video. Global metrics such as viewcount provide a crude signal to the user about the quality of the content. Given the limited attention span and vast quantity of content available on the Internet, the tendency of a particular user to view a video or read an article increases with the number of people who have already viewed/read it. Hence a YouTube video with many views or a news article with many ``Likes'' on Facebook is more likely to be accessed than others. The understanding and prediction of popularity evolution \cite{hong11predict-twitter, 
szabo10predict-popularity} of such content is crucial to the content provider for (i) choosing appropriate content caching strategies for quick delivery (ii) deploying better advertisement mechanisms for increased monetization.

% Content-hosting platforms also use this popularity information to re-rank the content, thus making highly popular content more visible, leading to an analogue of the preferential attachment \cite{barabasi99emergence} phenomena on complex networks. This also explains why a significant fraction of content get very limited views, while a few go ``viral'' \cite{crane08viral}. 
 
\textit{Related Work:} Threshold models are well established for modeling the evolution of popularity in human populations. Everett \cite{everett62diffusion-innovation} explored the adoption of innovations, employing examples from  rural sociology, and noted the diversity in people's propensity to adopt an innovation. He categorized them into various adopter groups (see Figure~\ref{fig:everett-bell-curve}) in what is now known as the Everett's bell curve\footnote{Everett's use of the term  ``bell curve'', is not be taken to imply that the thresholds are normally distributed.}. It is used to represent the process of adoption of a new product over time.  Under the threshold interpretation, users in the leftmost class (innovators) can be interpreted as having the least threshold to adopt an innovation, and the ones in the rightmost class (laggards) as having the highest threshold (and hence least susceptible). 

\begin{figure}[t]
\centerline{\includegraphics[scale=0.3]{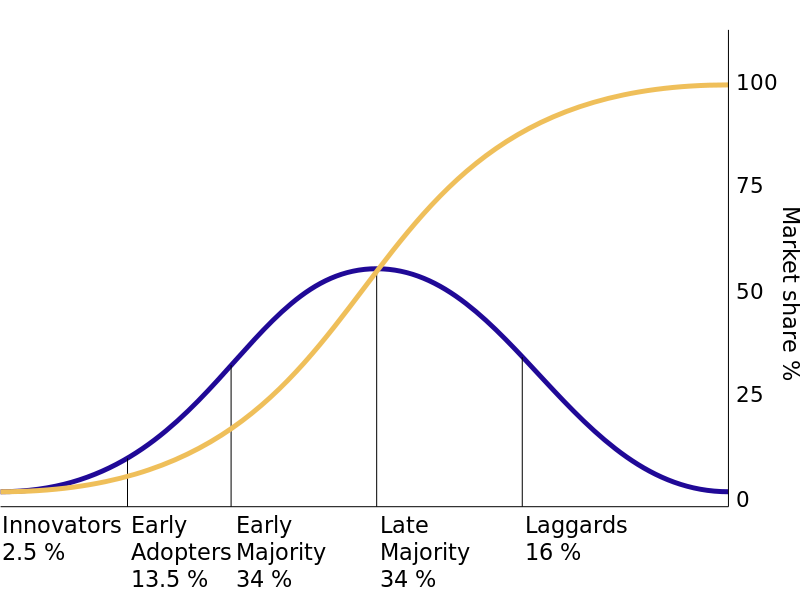}}
\caption{Everett's ``bell curve''. It is also known as the Technology Adoption Lifecycle since it represents the adoption of a new innovation over time. The bell shaped curve represents the incremental change in adoption over time, whereas the ``s'' shaped curve shows the cumulative adoption. Under the threshold interpretation, the early adopters (innovators) can be thought of as having the least threshold for adoption, and so on.(Source \cite{everett62diffusion-innovation})}
\label{fig:everett-bell-curve}
\end{figure}

Granovetter, in his seminal work \cite{granovetter78threshold-models} on collective behavior, aimed to use threshold distributions to model the spread of binary decisions among a group of rational agents, for instance during riots, voting, etc. Using his model, he calculated the equilibrium, i.e., steady state split of the population (between the binary decisions) given the threshold distribution, and also considered the stability of such equilibria. Valente \cite{valente96social-thresholds} further refined this approach to threshold phenomena based on personal networks (local neighbourhood of an individual) as against whole social systems, and empirically studied datasets on the adoption of medical and rural innovation. 

Domingos and Richardson \cite{domingos01mining-network-value} studied influence spread in the context of viral marketing, and they posed the algorithmic question of maximizing the spread of influence, given the underlying social influence network. Kempe et al.\ continued the algorithmic approach in \cite{kempe-etal03max-spread-infl}, where they studied the influence maximization problem under two different activation models (Linear Threshold model and the Independent Cascade model). They proved the submodularity of the influence function (i.e., the set valued function that maps the initial ``seed'' set of adopters to the final set), and provided greedy approximation algorithms to maximize influence spread under the linear threshold model. Recently, in the context of user-generated content on the Internet, game theoretic analysis has shown that threshold based policies could emerge as equilibria when user's seek to maximize their utility (based on the perceived quality of the content) \cite{altman-
etal13emergence-equilibria}.

\textit{Missing Link:}
Our work in this paper is inspired by the earlier efforts to employ a threshold model for the propensity of a user to be influenced by others in the population~ \cite{kempe-etal03max-spread-infl} \cite{granovetter78threshold-models}.  We will now discuss certain modeling details in these two important models of threshold based spread of influence, to motivate our work. Granovetter \cite{granovetter78threshold-models}, rooted in sociology, considered \emph{general influence threshold distributions}, and characterized the spread of influence by a simple difference equation. For instance, if the threshold is distributed with c.d.f. $F(\cdot)$, letting $r(t)$ denote the number of influenced individuals at time $t$, the evolution is described by the difference equation 

\begin{equation}
r(t+1) = F(r(t))
\label{eqn:gran-fixedpoint}
\end{equation}
and thus, the equilibrium outcome is a fixed point of Equation~\ref{eqn:gran-fixedpoint} (see Figure~\ref{fig:granovetter}). This approach provides an explicit dynamics of influence spread, and characterizes the fraction of the population that is eventually influenced. Though the analysis seems reasonable at first glance, careful inspection reveals that \emph{the implied system dynamics will involve nodes resampling their thresholds at every timestep}. One should note that the threshold distribution is introduced to capture the variation in the unknown norms/preferences of the individuals in the population, but once sampled, they should remain unaltered during the process. Also, in Equation~\ref{eqn:gran-fixedpoint}, there is no distinction between nodes that are already active and the nodes that are still susceptible to influence. This contradicts the assumption that the spread of influence is progressive (where nodes once influenced will remain active until the end of the process).

On the other hand, Kempe et al. \cite{kempe-etal03max-spread-infl} assumed a uniform distribution for the influence threshold, and focused on the algorithmic problem of selecting an initial seed set (of a given size) so as to maximize the final influenced set. Although, the model has progressive spread dynamics, the evolution itself was not a primary concern in \cite{kempe-etal03max-spread-infl}. It was explicitly noted that the thresholds need to be sampled only once \emph{at the beginning of the process}.

Finally, for the uniform distribution of threshold (as assumed throughout in \cite{kempe-etal03max-spread-infl}), Equation~\ref{eqn:gran-fixedpoint} does not yield any useful insight, i.e., it does not predict a spread of influence.

Thus, although both Granovetter \cite{granovetter78threshold-models} and Kempe at al. \cite{kempe-etal03max-spread-infl} work with influence threshold distributions, the dynamics of the influence process are quite different. Further, although Kempe et al. work only with uniformly distributed threshold, Granovetter permits more general threshold distributions. The point of departure of our work is to adopt the idea of general threshold distributions from \cite{granovetter78threshold-models}, while retaining the more natural model of sampling each individual's threshold just once in the beginning, and of the progressive spread of influence from \cite{kempe-etal03max-spread-infl}.

\begin{figure}[t]
\centerline{\includegraphics[scale=0.4]{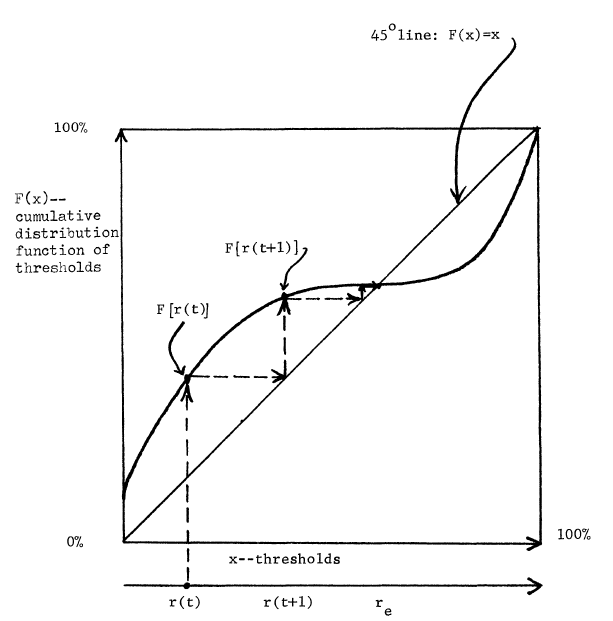}}
\caption{Granovetter's Fixed point dynamics. The intersection(s) of the cumulative distribution function with the $45^\circ$ line, indicate the equilibrium points, i.e. solution to the fixed point equation~\ref{eqn:gran-fixedpoint}. (Source \cite{granovetter78threshold-models})}
\label{fig:granovetter}
\end{figure}

\textit{Our Contributions:}
We adopt a fluid limit approach for analytically characterizing the dynamics of the spread of influence in the Linear Threshold model, with general threshold distributions. In order to do this, in Section~\ref{sec:math-model} we propose a homogeneous version of the Linear Threshold model called the Homogeneous Influence Linear Threshold (HILT) model, with arbitrary threshold distribution. In Section~\ref{sec:ode-approx} we characterize the evolution of influence in the HILT model as a Markov process, and using Kurtz's theorem \cite{kurtz70ode-markov-jump-processes}, we derive a system of ordinary differential equations (o.d.e.). We provide simulation results that show that the o.d.e. approximates the original process fairly well for large values of $N$, the population size. In Section~\ref{sec:uniform}, we explicitly solve the o.d.e. for the uniform threshold distribution thus providing an explicit characterization of the evolution of influence in the model of Kempe et al.\cite{kempe-etal03max-spread-infl}. 
We also provide an analytical expression for the terminal spread of influence and use it to address some optimization problems (Section~\ref{sec:numerical}). 

We note that the threshold distribution features in the o.d.e. via the \emph{hazard function}\cite{cox61renewal-theory}, commonly used in survival/failure analysis. To the best of our knowledge, this is the first work that incorporates the hazard function to characterize variation among the individuals in an epidemic model. The variation of risk as an epidemic progresses has been empirically observed in recent studies in veterinary medicine \cite{morton11epidemic} which has highlighted the need to ``consider possible differences in the risk of infection among subgroups in the population''.

We then proceed to study the effect of the threshold distribution in Section~\ref{sec:threshold}. We also note that under the exponential threshold distribution, the fluid dynamics of the HILT model is equivalent to that of the classic SIR model from epidemiology \cite{daley-gani99epidemic-modeling}, thus providing another interesting link between the influence spread and epidemics literature. Finally, we also show that the analysis can be extended to a heterogeneous system with communities (Section~\ref{sec:multiclass}), and conclude with some possible future directions. 

\textit{Comment on Network Topology:}
It has been noted that network topology plays a crucial role in the spread of influence on a social network \cite{gould93collective-network}. In this work, however, we consider only completely connected graphs, while deriving the fluid limit equations. While it is necessary to study the impact of degree distribution on the fluid dynamics, our primary focus is to provide the missing link between Kempe's model \cite{kempe-etal03max-spread-infl} and Granovetter's model \cite{granovetter78threshold-models}, and thus do not discuss the effect of network topology in this paper. 

\section{Mathematical Model}
\label{sec:math-model}
We shall first introduce the network model described in Kempe et al.\ \cite{kempe-etal03max-spread-infl}. The social network is a weighted directed graph $\mathcal{N}=(V,E)$, where the edge weight $w_{i,j}$ gives a measure of influence of node $i$ on node $j$. The activation process (see Figure~\ref{fig:spread_infl}) begins with an initial set of active, infectious nodes $\mathcal{A}_{0} = \mathcal{D}_{0}$ and takes place in discrete time steps. Each active node spreads its influence to \emph{each} of its inactive neighbours. By the activation process, some of the neighbours become activated to be part of $\mathcal{D}_1$, and can spread their influence in the next step. At the end of each step the population is partitioned into three sets of nodes: nodes that were just activated in that step $\mathcal{D}_k$ (also referred to as \emph{infectious} nodes), active nodes that have already exercised their influence $\mathcal{B}_k = \mathcal{A}_{k-1}$ and, hence, are no longer infectious, and the set of inactive 
nodes ($\mathcal{N}\backslash \mathcal{A}_k$). Note that $D_k \subseteq A_k$. The activation process stops at a random time $U$ when there are no more infectious nodes, i.e., $D_U = \emptyset$ and a \emph{terminal set} $A_{U}$ is reached, from where the activation process cannot proceed further. We also assume that once a node has become active, it cannot become inactive (\emph{progressive case}).

\subsection{Linear Threshold model}
An activation model describes how the infectious nodes cause the inactive nodes to become active (and infectious). There are two widely used activation models, namely, the \emph{Linear Threshold model} and the \emph{Independent Cascade model}, proposed in \cite{kempe-etal03max-spread-infl}. Our work in this paper begins with the Linear Threshold (LT) model. In the LT model, \( \sum_{i\neq j}  w_{i,j} \leq 1 \), i.e., the maximum possible influence on any node is bounded by 1 (see Figure~\ref{fig:sn-model}). In this model, each node $j$ randomly chooses a threshold $\Theta_{j}$ from a uniform distribution $U[0,1]$, \emph{at the beginning}. An inactive node, receives influence from all its active neighbours, and gets activated once the net received influence exceeds the chosen threshold. In other words, a node $j$ gets activated in step $k$ if, it had been inactive until step $k-1$, i.e. $ j \notin A_{k-1}$, and
\[ \sum_{i \in A_{k-1}}   w_{i,j} \geq \Theta_{j} \] 

\begin{figure}[t]
\centering
\includegraphics[scale=0.3]{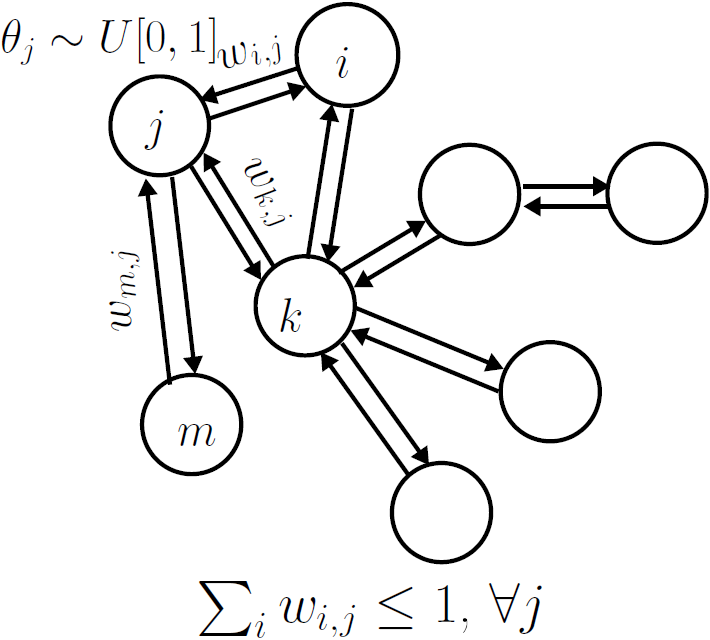}
\caption{An influence graph of a social network under the Linear Threshold model as introduced by Kempe et al. \cite{kempe-etal03max-spread-infl}.}
\label{fig:sn-model}
\end{figure}

\begin{figure}[t]
\centering
\includegraphics[scale=0.5]{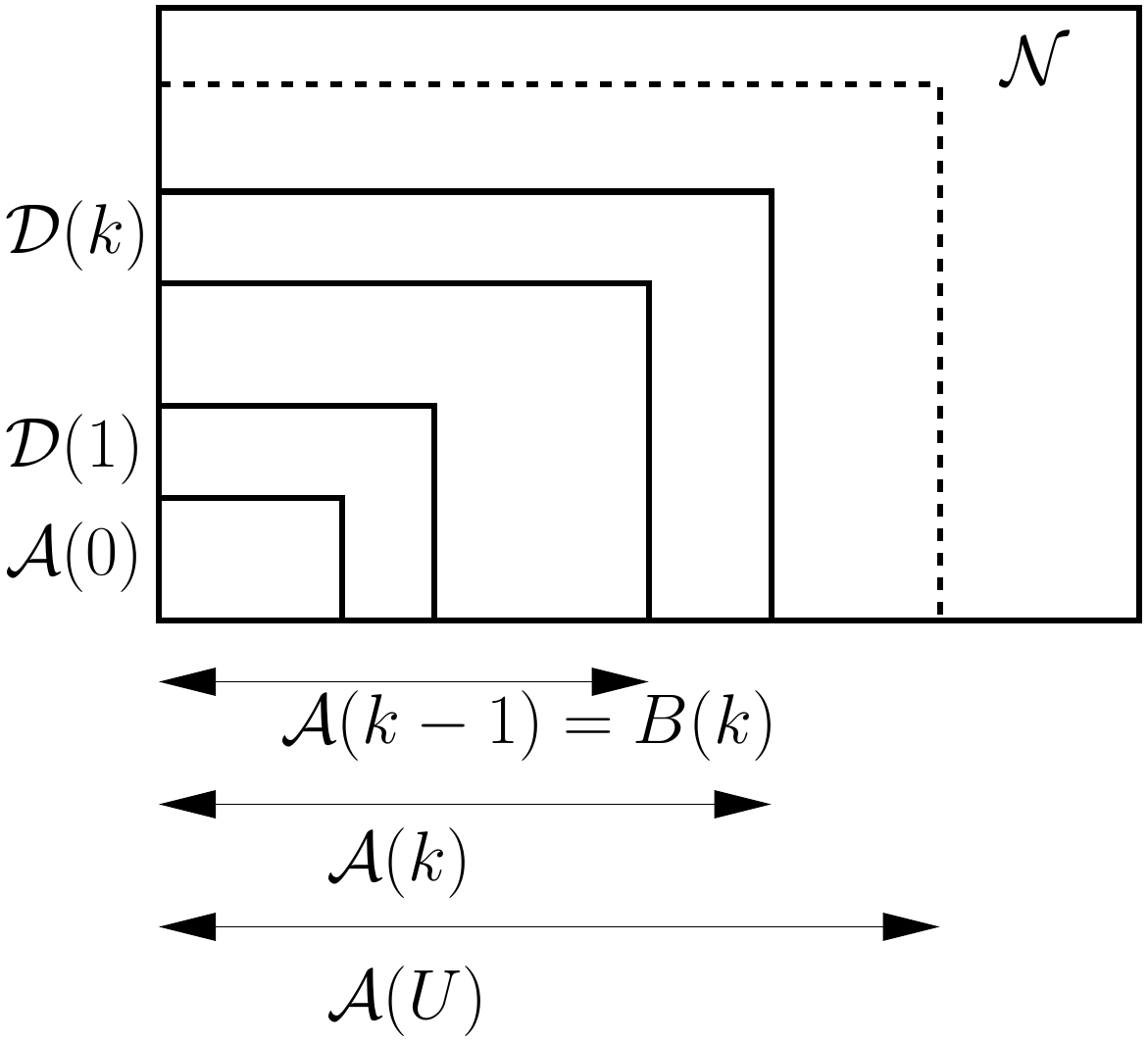}
\caption{Evolution of the set of influenced nodes under the Linear Threshold model.}
\label{fig:spread_infl}
\end{figure}

\subsection{HILT Network model}
Consider the population to be a social network $\mathcal{N}$ of $N$ nodes where the graph is complete, and each edge carries the same weight. Also, let the thresholds $\Theta_{j}$ be chosen from an arbitrary threshold distribution with cumulative density function (c.d.f.) $F$. We call this the Homogeneous Influence Linear Threshold (HILT) network model. The influence matrix $\mathbf{W}$ is given as follows:
for all $i \neq j$,
\[w_{i,j} = \gamma\] 
and
\[w_{i,i} = 0\]

\begin{figure}[t]
\centering
\includegraphics[scale=0.4]{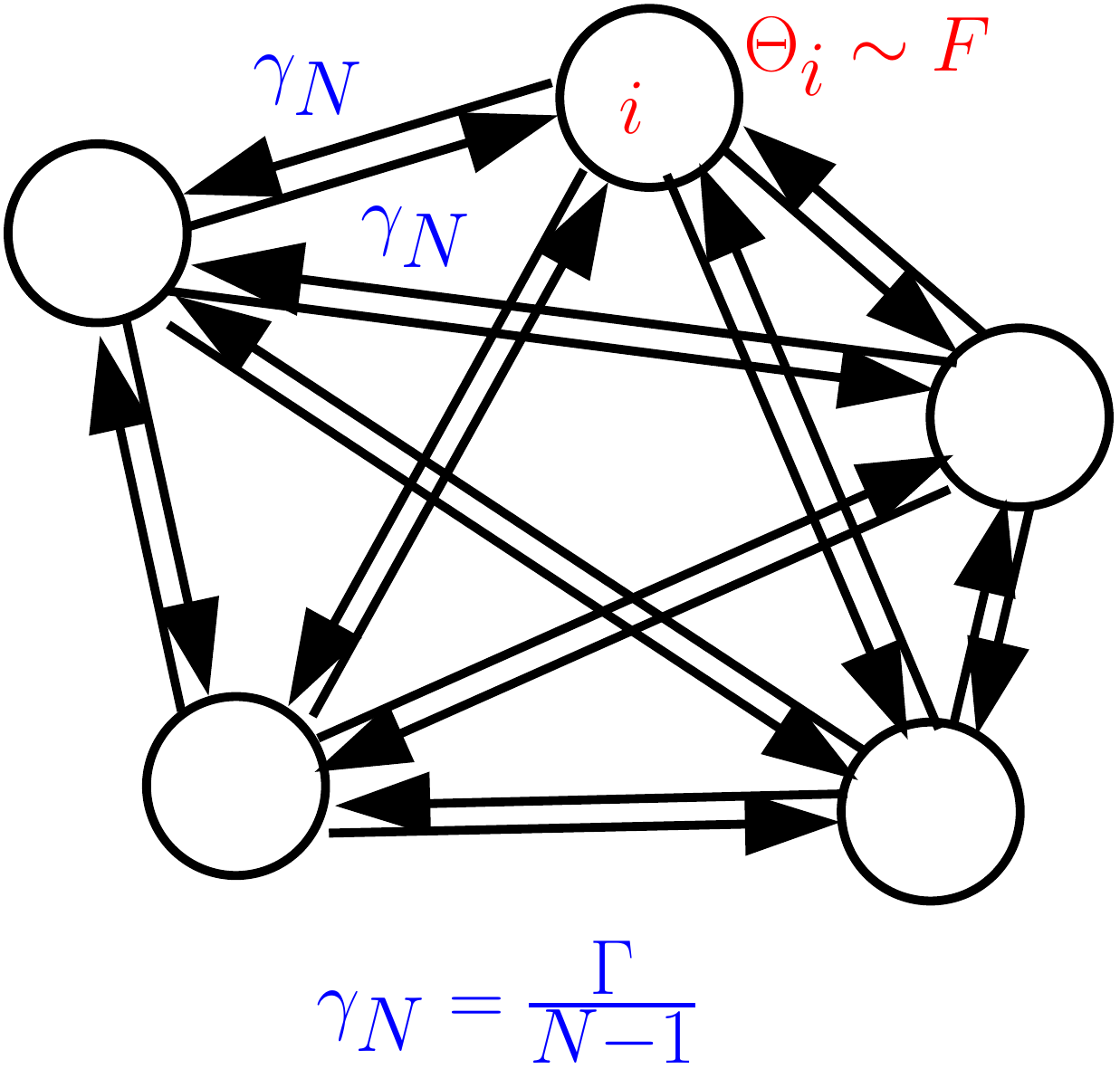}
\caption{The HILT Network model}
\label{fig:hilt}
\end{figure}

Carrying over the assumption in \cite{kempe-etal03max-spread-infl}'s model, we will assume that $\gamma_N \leq \frac{1}{N-1}$ when dealing with uniform threshold distribution. This is because, under uniform distribution, the maximum threshold is $1$, and it does not make sense to consider influences greater than $1$. In Section~\ref{sec:threshold}, when discussing various threshold distributions with unbounded support, we show that this restriction can be removed. 

\section{A Scaled Markov Chain and its Fluid Limit}
\label{sec:ode-approx}
Consider the HILT model on $N$ nodes, and with edge weights $\gamma (N)$ such that $\lim_{N \rightarrow \infty} \gamma (N) N = \Gamma$, and the threshold distribution at the nodes given by $F$. In this section we will use Kurtz's theorem \cite{kurtz70ode-markov-jump-processes} to obtain a two dimensional o.d.e. that can serve as a fluid approximation for the evolution of the stochastic processes in the HILT model. 

Let $A(k)$ and $D(k)$, respectively, be the \emph{sizes} of the active and infectious sets at time $k$. In the HILT model, due to homogeneity, the precise membership of these sets is irrelevant and it is sufficient to keep track of set sizes. Instead of $A(k)$, we will work with $B(k) = A(k-1)$ to distinguish the active nodes that have exercised their influence, and the infectious nodes. Recall that, $B(k)$ is the size of the subset of active nodes that \emph{have exercised their influence} by time $k$, whereas $D(k)$ is the size of the subset of active nodes at time $k$ that \emph{have not yet had a chance to exert their influence} on the inactive nodes. It is easy to observe that $(B(k), D(k))$ is a discrete time Markov chain (DTMC) (see Appendix~\ref{app:dtmc}). By definition,

\[B(k+1)= B(k) + D(k) \]

\subsection{A Scaled Markov Chain}
In order to obtain an approximating o.d.e., we need to work with an appropriately scaled Markov process $(B^{N}(k), D^{N}(k))$, which can be thought of as evolving on a time scale $N$ times faster than that of the original system. We can visualize this process as evolving over ``minislots'' of duration $1/N$, whereas the original process evolves at the epochs $0,1,2,\cdots$. Since this new process runs on a faster time scale, we need to slow down its dynamics. In each minislot, each node in $D^{N}(k)$ decides to spread its influence with probability $\frac{1}{N}$ or defer with probability $1-\frac{1}{N}$. In the former case, it contributes its influence of $\gamma$ and then moves to the set $B^{N}(k+1)$, else it stays in $D^{N}(k+1)$ set (see Figure~\ref{fig:minislots}). A similar scaling has been used in the context of the analysis of random multi-access algorithms by Bordenave et al.~\cite{bordenave05mean-fields}. The reason for such a scaling is explained in Appendix~\ref{app:scaling-kurtz}, where we 
contrast it with the traditional amplitude and time scaling. The evolution of this process can be written as follows:

\begin{figure}[t]
\centering
\includegraphics[scale=0.3]{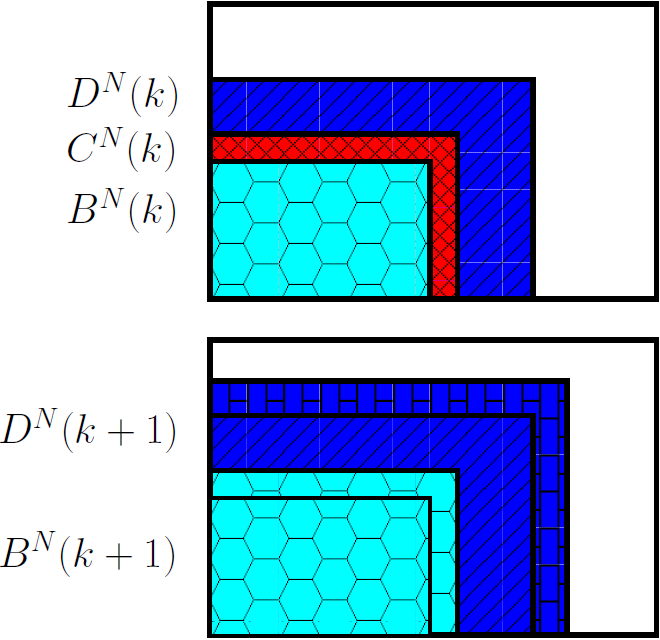}
\caption{Evolution of the scaled process}
\label{fig:minislots}
\end{figure}
\[ B^{N}(k+1) = B^{N}(k) + \frac{D^{N}(k)}{N} + Y^{N}(k+1) \]
\begin{eqnarray*}
D^{N}(k+1) &=& \frac{N-1}{N} D^{N}(k) + Z^{N}(k+1) \\
&+& \frac{F(\gamma(B^{N}(k) + \frac{D^{N}(k)}{N}))- F(\gamma(B^{N}(k))}{1-F(\gamma(B^{N}(k))} \times \\
&& \hspace{2cm} (N-B^{N}(k) -D^{N}(k)) \\
\end{eqnarray*}
where $Y^{N}(k+1)$ and $Z^{N}(k+1)$ are zero mean random variables. 

Dividing the evolution equations by $N$ (the number of nodes in the network) and defining $\Tilde{B}^{N}(k) = \frac{B^{N}(k)}{N}$, $\Tilde{D}^{N}(k) = \frac{D^{N}(k)}{N}$, 
we can obtain the drifts for $\Tilde{B}^{N}(k), \Tilde{D}^{N}(k)$ for the fraction of nodes in each state. 
\[ \Tilde{B}^{N}(k+1) = \Tilde{B}^{N}(k) + \frac{\Tilde{D}^{N}(k)}{N} + \Tilde{Y}^{N}(k+1) \]
\begin{eqnarray*}
\lefteqn{\Tilde{D}^{N}(k+1)}\\
&=& \frac{N-1}{N} \Tilde{D}^{N}(k) + Z^{N}(k+1) \\
&+& \frac{F(\gamma(\Tilde{B}^{N}(k) + \frac{\Tilde{D}^{N}(k)}{N}))- F(\gamma(\Tilde{B}^{N}(k))}{1-F(\gamma(\Tilde{B}^{N}(k))} \times \\
&& \hspace{2cm} (N-\Tilde{B}^{N}(k) - \Tilde{D}^{N}(k)) \\
\end{eqnarray*}

Let $f_1^{N}(\Tilde{B}^{N}(k),\Tilde{D}^{N}(k))$ and $f_2^{N}(\Tilde{B}^{N}(k),\Tilde{D}^{N}(k))$ denote the mean drifts of $\Tilde{B}^{N}(k), \Tilde{D}^{N}(k)$. 

Consider the limiting drift function of $f_2^{N}(.)$ and observe that, 
\begin{eqnarray*}
\lefteqn{\lim_{N \rightarrow \infty} N \bigg( \frac{F(x+ \frac{y}{N}) - F(x)}{1-F(x)} \bigg)}\\
&=& \lim_{N \rightarrow \infty} \frac{y}{1-F(x)} \frac{F(x+ \frac{y}{N}) - F(x)}{y/N} \\
&=& \frac{y f(x)}{1-F(x)} \\
\end{eqnarray*}
Now consider $f_1(b,d)= d$ and $f_2(b,d) =  \frac{f(\Gamma b)\Gamma d}{1- F(\Gamma b)} (1-b-d) - d$ and define,
\[ f(b,d) := \bigg( f_1(b,d), f_2(b,d) \bigg) \]
\begin{theorem}
\label{thm:kurtz-theorem-hilt}
Given the Markov process $( \Tilde{B}^{N}(k), \Tilde{D}^{N}(k) )$, we have for each $T > 0$ and each $\epsilon > 0$,

\begin{eqnarray}
 P \bigg( \sup_{0 \leq t \leq T} \big| \big| \big( \Tilde{B}^{N}( \lfloor Nt \rfloor ),\Tilde{D}^{N}( \lfloor Nt \rfloor ) \big) - \big( b(t),d(t) \big) \big| \big| > \epsilon \bigg) &&  \nonumber \\
 & & \hspace{-2cm} \stackrel{N\rightarrow \infty}{\rightarrow} 0 \nonumber 
\end{eqnarray}
where $(b(t),d(t))$ is the unique solution to the ODE,

\[ \dot{b} = d\]
\[ \dot{d} = \frac{f(\Gamma b) \Gamma d}{1- F(\Gamma b)} (1-b-d) - d \]
with initial conditions $(b(0)=0,d(0)=a(0))$.

\end{theorem}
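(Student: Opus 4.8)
The plan is to recognize the claim as an instance of the fluid-limit (ODE) method for density-dependent Markov chains — Kurtz's theorem — so that the work reduces to casting $(\Tilde{B}^{N}, \Tilde{D}^{N})$ in the required form and verifying its hypotheses. Writing $X^{N}(k) = (\Tilde{B}^{N}(k), \Tilde{D}^{N}(k))$, I would first decompose the one-step increment as
\[ X^{N}(k+1) - X^{N}(k) = \frac{1}{N}\, f^{N}(X^{N}(k)) + M^{N}(k+1), \]
where $f^{N} = (f_1^{N}, f_2^{N})$ is the scaled conditional mean drift already identified in the excerpt and $M^{N}(k+1) = (\Tilde{Y}^{N}(k+1), Z^{N}(k+1))$ is a martingale-difference sequence for the natural filtration, since $Y^{N}$ and $Z^{N}$ were constructed to be zero mean given the past. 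Telescoping this identity over $k = 0, \ldots, \lfloor Nt \rfloor - 1$ yields the representation
\[ X^{N}(\lfloor Nt \rfloor) = X^{N}(0) + \frac{1}{N}\sum_{k=0}^{\lfloor Nt \rfloor - 1} f^{N}(X^{N}(k)) + \sum_{k=0}^{\lfloor Nt \rfloor - 1} M^{N}(k+1), \]
whose middle term is a Riemann sum for $\int_0^t f(X^{N}(\lfloor Ns\rfloor))\,ds$.

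Next I would verify the three standard hypotheses. (i) \emph{Uniform drift convergence}: the pointwise limit of $N\big(\frac{F(x+y/N)-F(x)}{1-F(x)}\big)$ to $\frac{y f(x)}{1-F(x)}$ computed above must be promoted to uniform convergence $f^{N} \to f$ on compact subsets of the state space $\{(b,d): b,d \geq 0,\ b+d \leq 1\}$; this follows from the assumed smoothness of $F$ together with $\gamma(N)N \to \Gamma$. (ii) \emph{Small fluctuations}: in each minislot every current node acts independently with probability $O(1/N)$, so $M^{N}(k+1)$ is a centered sum of independent displacements of size $1/N$ and has conditional second moment of order $1/N^2$. (iii) \emph{Initial conditions}: $X^{N}(0) \to (0, a(0))$. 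The noise is then controlled by Doob's $L^2$ maximal inequality: the accumulated second moment over the $O(N)$ steps up to time $T$ is $O(1/N)$, so $\sup_{0 \leq t \leq T} \big\| \sum_{k} M^{N}(k+1) \big\|$ is of order $1/\sqrt{N}$ and hence tends to $0$ in probability.

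With the noise negligible, the final step is a Gronwall comparison between the discrete trajectory and the ODE solution. Subtracting the integral form $(b(t),d(t)) = (0,a(0)) + \int_0^t f(b(s),d(s))\,ds$ from the representation above and invoking the \emph{Lipschitz continuity of the drift field} $f = (f_1, f_2)$ bounds $\|X^{N}(\lfloor Nt\rfloor) - (b(t),d(t))\|$ by the noise term, the drift-approximation error (the $f^{N}$-versus-$f$ gap plus the $O(1/N)$ discretization boundary term), the initial-condition gap, and a term $\int_0^t L\,\|X^{N} - (b,d)\|\,ds$; Gronwall's inequality then closes the estimate uniformly on $[0,T]$, and since the forcing terms vanish in probability, the stated convergence follows. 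Uniqueness of the ODE solution asserted in the theorem likewise follows from local Lipschitzness via Picard--Lindel\"of.

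The main obstacle is establishing the Lipschitz (or at least locally Lipschitz) property of the second drift component $f_2(b,d) = \frac{f(\Gamma b)\,\Gamma d}{1-F(\Gamma b)}(1-b-d) - d$, which contains the hazard rate $\frac{f(\Gamma b)}{1 - F(\Gamma b)}$. Near the upper end of the support of the threshold distribution $F(\Gamma b) \to 1$, so the denominator can vanish and the hazard rate blow up; for unbounded-support distributions the hazard need not be globally bounded at all. The delicate part is therefore to confine the analysis to a region on which the hazard rate is bounded and Lipschitz — either by assuming the density is smooth and arguing that the relevant trajectory stays bounded away from the support boundary up to time $T$, or by introducing a stopping time at the first exit from such a region and showing \emph{a posteriori} that the limiting trajectory never triggers it before the terminal time. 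Verifying that the fluid dynamics remain in this good region is where the structure of the HILT model — in particular the constraint $b + d \leq 1$ and the progressive nature of the spread — must be invoked.
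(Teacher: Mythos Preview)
Your proposal is correct and follows the same route as the paper: cast the scaled chain in density-dependent form, verify the Kurtz hypotheses (drift convergence, noise variance $O(1/N^2)$, Lipschitz drift, convergence of initial conditions), and conclude. The paper simply cites Kurtz's theorem and checks the conditions, whereas you additionally unpack the mechanism (martingale decomposition, Doob, Gronwall); this is the standard proof of Kurtz's result and not a different approach.

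One point worth noting: you are more careful than the paper on the Lipschitz issue. The theorem is stated for a general threshold distribution $F$, but the paper's appendix only verifies the Lipschitz property for the \emph{uniform} case, where $f_2(b,d) = \frac{\Gamma d}{1-\Gamma b}(1-b-d) - d$ has bounded partials on the simplex for $\Gamma < 1$. Your observation that for general $F$ the hazard rate $f(\Gamma b)/(1-F(\Gamma b))$ may blow up near the support boundary, and that one therefore needs either a smoothness assumption on $F$ or a stopping-time localization, is exactly the gap one has to close to make the general statement rigorous; the paper does not address it.
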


\begin{proof}
This is essentially an instance of Kurtz's theorem \cite{kurtz70ode-markov-jump-processes}; Also see \cite{darling02limits-purejump-markov}. In Appendix~\ref{app:theorem1} we provide the statement of Kurtz's theorem for our context, and the verify the necessary conditions to guarantee the convergence of the Markov processes to the fluid limit o.d.e. .
\end{proof}

\textit{Remark:} We know that the hazard function corresponding to the c.d.f. $F(x)$ is given by 
\[h_F(x) = \frac{f(x)}{1-F(x)}\] and hence the o.d.e. becomes,
\begin{equation}
 \dot{b} = d
 \label{eqn:b_limit}
\end{equation}
\begin{equation}
 \dot{d} = h_F(\Gamma b) \Gamma d  (1-b-d) - d 
 \label{eqn:d_limit}
\end{equation}

\subsection{Accuracy of the o.d.e. approximation}
Figure~\ref{fig:hilt-convergence} shows the convergence of the scaled process to the o.d.e. with increasing network sizes $N=50,100,500,1000$ and for $\Gamma=0.9$ and $d_0 =0.2$. We observe that for $N=1000$ the o.d.e. approximates the scaled process fairly well.

As noted in Appendix~\ref{app:scaling-kurtz}, the probabilistic scaling does not exactly replicate the original process. Hence, we have also compared the evolution of the original unscaled process for a fixed value of $N=1000$, with the o.d.e approximation. The results are shown in Figure~\ref{fig:kurtz-errorbar}, where multiple sample paths of the original process (obtained by using different random number seeds) are plotted along with the (deterministic) o.d.e. solution. We find that the o.d.e. solution approximates the mean evolution of the original process well.

\begin{figure}[t]
\centerline{\includegraphics[scale=0.17]{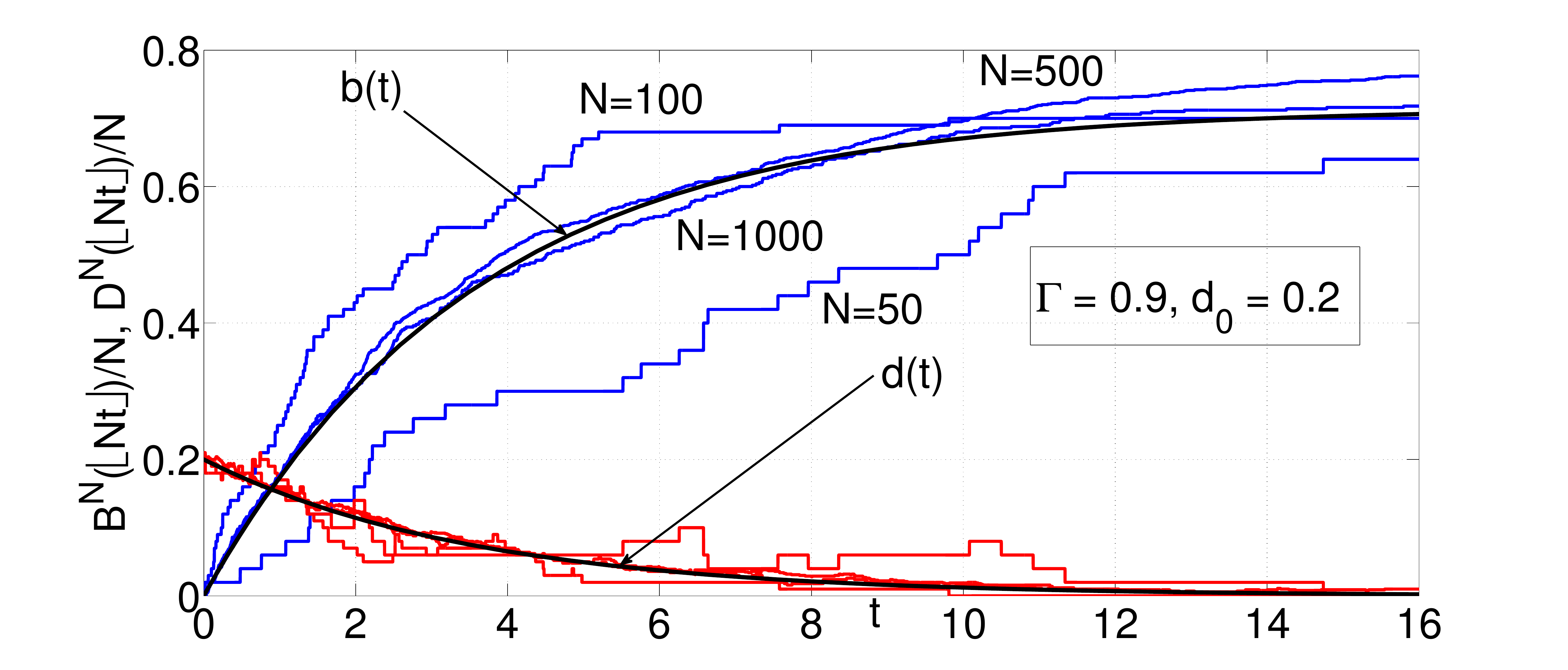}}
\caption{Trajectory of the fluid limit $(b(t),d(t))$ plotted along with samplepaths of the scaled process $\Tilde{B}^{N}( \lfloor Nt \rfloor ),\Tilde{D}^{N}( \lfloor Nt \rfloor )$ for $N=50,100,500,1000$.}
\label{fig:hilt-convergence}
\end{figure}

\begin{figure}[t]
\centerline{\includegraphics[scale=0.17]{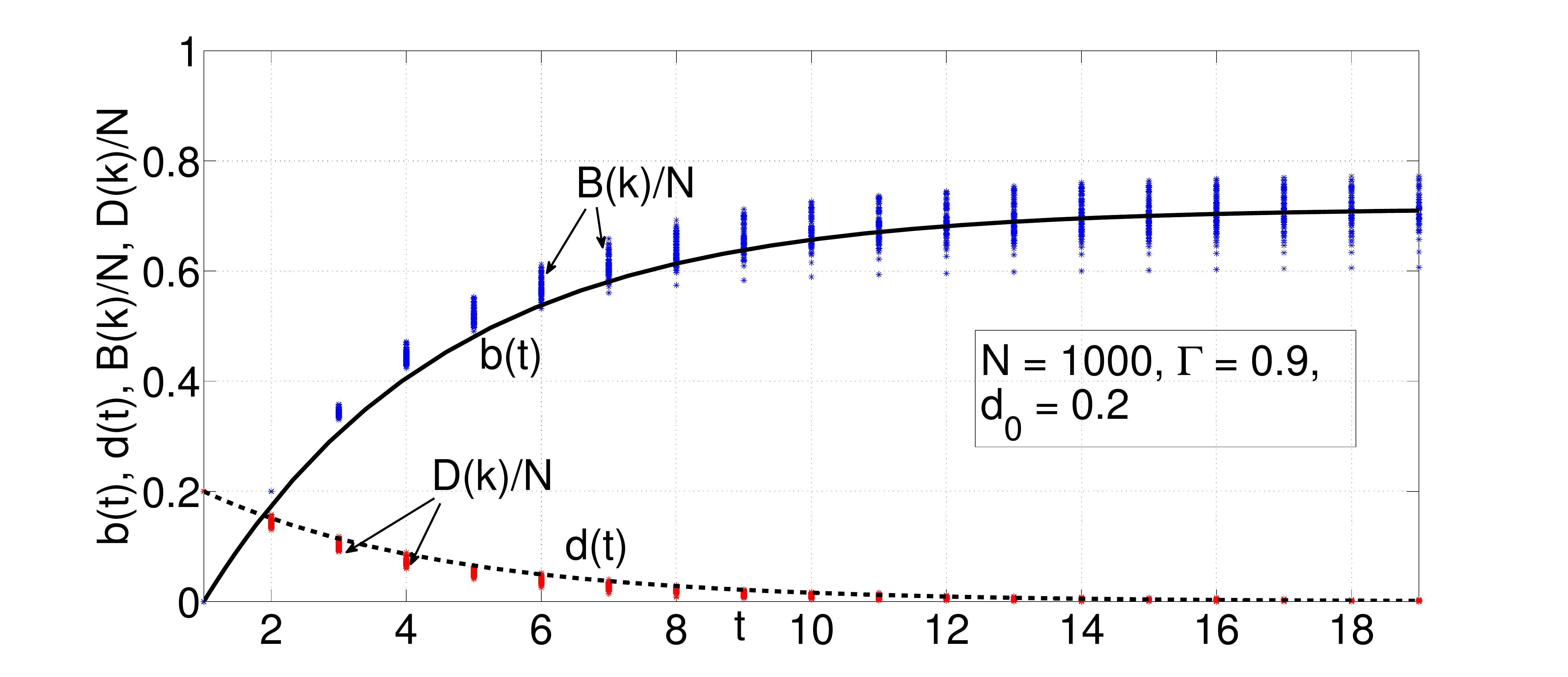}}
\caption{Trajectory of the fluid limit $(b(t),d(t))$ plotted along with multiple runs of the original process (normalized) $(\frac{B(k)}{N},\frac{D(k)}{N})$ for $N=1000$. }
\label{fig:kurtz-errorbar}
\end{figure}

\section{Uniform Threshold Distribution}
\label{sec:uniform}
In this section, we will consider the o.d.e. approximation of the HILT process, under the uniform distribution of threshold. The hazard function for uniform distribution is given by $h_F(x) = \frac{1}{1-x}$ and thus the system of o.d.e. becomes, 
\[ \dot{b} = d\]
\[ \dot{d} = -d + \frac{\Gamma d}{1 - \Gamma b} (1 -b - d) \]
It turns out that we can explicitly solve the above system, thus yielding closed form expressions for $(b(t), d(t))$. We will derive these closed form expressions and use these explicit expressions. 

\subsection{Solution to the o.d.e. }
On solving the o.d.e. for uniform distribution, with initial conditions $b(0)=0,d(0)=d_0$ and defining $r = 1 - \Gamma + \Gamma d_0$, we get
\[b(t) = \frac{d_0}{r} - \frac{d_0}{r} e^{-rt} \]
\[d(t) = d_0 e^{-rt} \]
In Appendix \ref{app:ode-solving} we provide the steps involved in obtaining the solution. From the above equations we can state the following theorem:

\subsection{Terminal spread of influence}
\label{subsec:terminal-uniform}
The following theorem results from a simple observation of the o.d.e.'s.
\begin{theorem}
Given that we start with $d_0$ fraction of nodes in the infectious set in an HILT network with parameter $\Gamma$, then the final fraction of activated nodes will be $\frac{d_0}{r}$ where $r=1-\Gamma+\Gamma d_0$.
\end{theorem}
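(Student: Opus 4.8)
The plan is to read the terminal spread directly off the closed-form solution stated in the previous subsection, so that the whole argument reduces to a single limit as $t \to \infty$. First I would recall from the model description that the process halts precisely when the infectious set is exhausted, i.e.\ when $d(t) = 0$, and that the active set at any time is the disjoint union of the nodes that have already exercised their influence and those still infectious, so its fraction is $b(t) + d(t)$. Consequently the final fraction of activated nodes is $\lim_{t \to \infty} \big( b(t) + d(t) \big)$, and the task is just to evaluate this limit.

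Next I would substitute the explicit expressions $b(t) = \frac{d_0}{r}\big(1 - e^{-rt}\big)$ and $d(t) = d_0 e^{-rt}$ and send $t \to \infty$. The only point requiring care is that $r > 0$, which guarantees $e^{-rt} \to 0$. Writing $r = 1 - \Gamma(1 - d_0)$ and invoking the modeling constraint $\Gamma \leq 1$ (inherited from $\gamma_N \leq \frac{1}{N-1}$ under the uniform threshold) together with $0 \leq d_0 \leq 1$, one gets $\Gamma(1 - d_0) \leq 1$, hence $r > 0$ for every nontrivial seed $d_0 > 0$ (the case $d_0 = 0$ being the degenerate empty-seed case, where nothing spreads). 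Granting $r > 0$, we obtain $d(t) \to 0$ and $b(t) \to \frac{d_0}{r}$, so the terminal activated fraction equals $\frac{d_0}{r}$, as claimed.

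I do not expect a genuine obstacle here: once the closed form is available the result is essentially immediate, the only substantive checks being positivity of $r$ and the observation that $d(t) \to 0$, so that $\lim (b+d)$ collapses to $\lim b$. Should one wish to avoid relying on the explicit solution, an alternative is to integrate $\dot b = d$ against the $\dot d$ equation to extract a conserved quantity and locate the terminal state as a fixed point; but since $(b(t), d(t))$ has already been solved in closed form, taking the limit is by far the cleanest route.
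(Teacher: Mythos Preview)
Your proposal is correct and matches the paper's approach: the paper simply states that the theorem ``results from a simple observation of the o.d.e.'s,'' i.e., reading off $b_\infty = d_0/r$ from the closed-form solution, which is exactly what you do. Your explicit check that $r>0$ under $\Gamma\leq 1$ and $d_0>0$ is a nice addition that the paper leaves implicit.
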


\textit{Remarks:}
\begin{itemize}
 \item We might also be interested in the question of choosing the right $d_0$ which can give us the required $b_\infty$, and we see that
\[ d_0 = \frac{b_\infty (1 -\Gamma)}{1 - b_\infty \Gamma} \]  
 \item We observe that, for large $N$, as long as $\Gamma < 1$ we cannot influence the entire population (i.e., $b_\infty =1$) unless we start off with the entire population active (i.e., $d_0 =1$). But if $\Gamma = 1$ then $b_\infty =1$ provided $d_0 > 0$.
\end{itemize}

Consider the discrete influence process (the Kempe model \cite{kempe-etal03max-spread-infl}), and let $\sigma^{(\mathcal{N},\mathcal{A}_0)} = \mathbb{E}^{(\mathcal{N},\mathcal{A}_0)}[|A_U|]$ be the expected size of the \emph{terminal set} $A_U$, starting with $\mathcal{A}_0$ as the initial set in the network $\mathcal{N}$. Since all initial sets are equivalent in the HILT model, we will be interested in the influence of a set of size $m$. Define $h_{\gamma}^{(N)}(m) :=\sigma^{(\mathcal{N},\mathcal{A})}$, for all $\mathcal{A}$ of size $m$. 

By using results from \cite{srini-kumar11LT-model-ncc}, we can show that, 

\[h_{\gamma}^{(N)}(m) = m[1 + (N-m)\gamma [ 1 + (N-m-1)\gamma[ 1 + \cdots \]

The behaviour of $h_{\gamma}^{(N)}(m)$ as a function of $\gamma_N$ and $m$ can be seen in Figure~\ref{fig:hilt_k_3000} (depicted by solid lines), for a network of 3000 nodes. We also superimpose the behavior of $b_\infty$ against $d_0$ (depicted by asterisks). We observe that there is an exact match, except for $\Gamma = 1$. For $\Gamma =1$, as seen earlier, we know that $b_\infty =1$ as long as $d_0 > 0$. This is however true only in the fluid limit, and hence the discrepancy for finite $N$.

\begin{figure}[t]
\centerline{\includegraphics[scale=0.20]{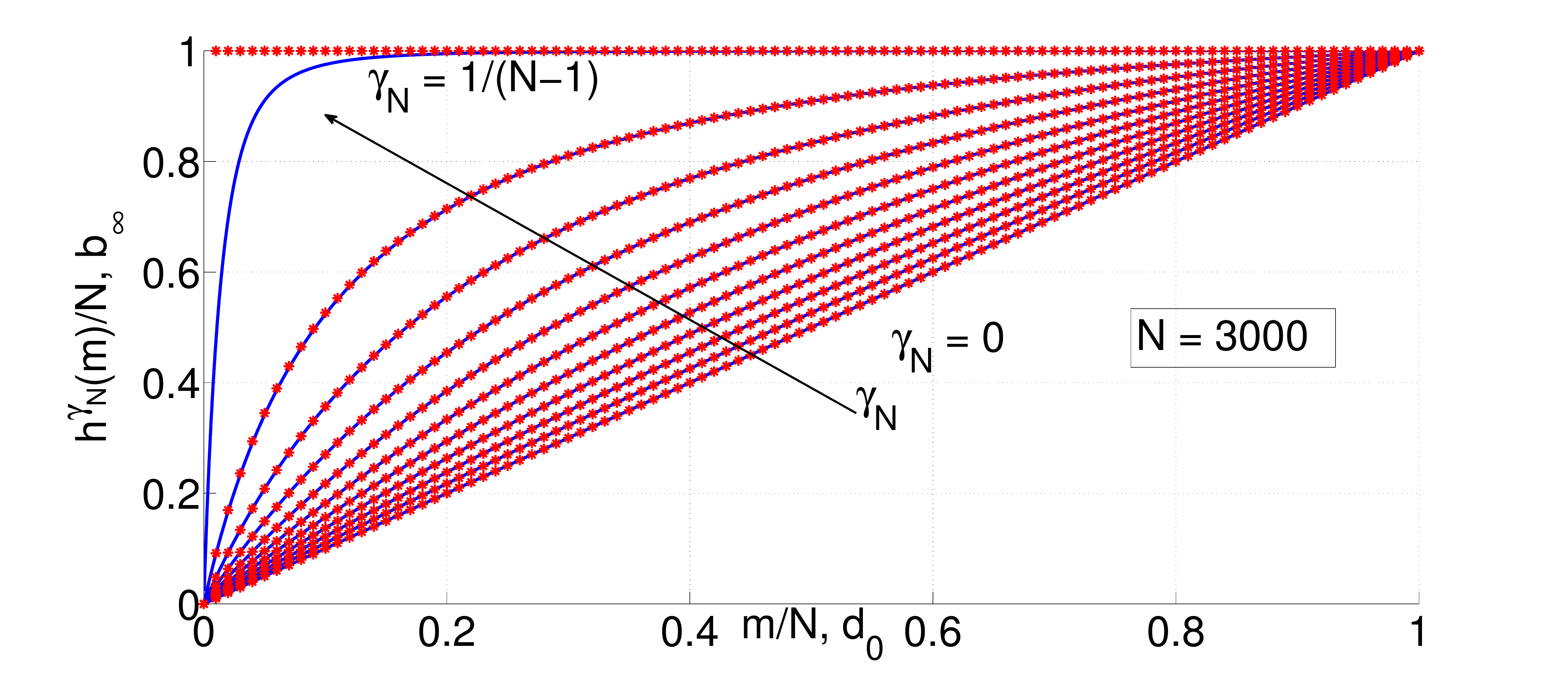}}
\caption{$h_{\gamma}^{(N)}(k)$ versus $k$ for $N=3000$ (shown by solid lines) and $b_\infty$ versus $d_0$ (shown by asterisks) for various values of $\gamma_N$. }
\label{fig:hilt_k_3000}
\end{figure}

Taking $\Gamma = N \gamma$ and $d_0 = \frac{m}{N}$, we can show that as $N \rightarrow \infty$, $\frac{h_{\gamma}^{(N)}(m)}{N} \rightarrow b_\infty$. See Appendix \ref{app:hilt-convergence} for the proof. This provides another verification of the accuracy of the o.d.e. approximation for large $N$. 

\section{Time constrained optimization}
\label{sec:numerical}
While the analytical expression $h_{\gamma}^{N} (m)$ derived earlier for HILT gives only the expected size of the terminal set, the o.d.e. dynamics approximates the trajectory of influence evolution, for large $N$. This can be useful, especially in problem settings where the time taken by the process for the spread of influence is also considered, in addition to the size of the initial set.

\begin{theorem}
Given the initial fraction of infected nodes $d_0$ in an HILT network with parameter $\Gamma$, the time we have to wait to get at least $\alpha$ ($\alpha < \frac{d_0}{r}$) fraction of nodes active is given by,

\[ T(\alpha, d_0, \Gamma) = \frac{1}{r} \ln \bigg( \frac{1-r}{1-\frac{\alpha}{d_0}r} \bigg) \]
where $r = 1 - \Gamma + \Gamma d_0$.

\end{theorem}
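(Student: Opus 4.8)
The plan is to start from the closed-form solution of the uniform-threshold o.d.e.\ already obtained in the previous subsection, namely
\[ b(t) = \frac{d_0}{r} - \frac{d_0}{r} e^{-rt}, \qquad d(t) = d_0 e^{-rt}, \]
with $r = 1 - \Gamma + \Gamma d_0$, and to invert the expression for the \emph{total} active fraction. The key modelling observation is that at any time the active nodes are exactly those that have already exercised their influence (fraction $b$) together with those that are still infectious (fraction $d$); since $A(k) = B(k) + D(k)$, in the fluid limit the active fraction is $a(t) = b(t) + d(t)$. Substituting the closed forms and simplifying gives
\[ a(t) = b(t) + d(t) = \frac{d_0}{r}\Big( 1 - (1-r) e^{-rt} \Big). \]

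Next I would record the boundary and monotonicity properties of $a$: we have $a(0) = d_0$ and $a(t) \to d_0/r = b_\infty$ as $t \to \infty$, which is consistent with the terminal-spread theorem of Section~\ref{subsec:terminal-uniform} and explains the standing hypothesis $\alpha < d_0/r$ (one cannot wait for more than the terminal spread). Since $a$ is strictly increasing in $t$, the first time it reaches level $\alpha$ is unique. Setting $a(T) = \alpha$ and isolating the exponential yields
\[ (1-r)\, e^{-rT} = 1 - \frac{\alpha}{d_0}\, r, \]
whence, taking reciprocals and logarithms,
\[ e^{rT} = \frac{1-r}{\,1 - \frac{\alpha}{d_0} r\,}, \qquad T = \frac{1}{r} \ln\!\bigg( \frac{1-r}{\,1 - \frac{\alpha}{d_0} r\,} \bigg), \]
which is precisely the claimed formula.

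The only points needing care — and the closest thing to an obstacle in an otherwise routine inversion — are bookkeeping rather than analysis. First, one must identify that the relevant ``active'' fraction is the sum $b+d$ and not $b$ alone; using $b$ would drop the $(1-r)$ factor in the numerator and give a different (wrong) expression. Second, one must check that the logarithm is well defined and that $T$ has the right sign. For this, observe that $1 - r = \Gamma(1-d_0) > 0$, while the hypothesis $\alpha < d_0/r$ forces $\frac{\alpha}{d_0} r < 1$, so both the numerator and denominator inside the logarithm are strictly positive; the argument exceeds $1$ exactly when $\alpha > a(0) = d_0$, so $T \ge 0$ over the meaningful range $d_0 < \alpha < d_0/r$, and the monotonicity of $a$ confirms that $T$ is the unique waiting time.
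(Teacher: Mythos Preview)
Your proof is correct and follows essentially the same approach as the paper: identify the active fraction as $a(t)=b(t)+d(t)$, substitute the closed-form solutions of the uniform-threshold o.d.e., set $a(T)=\alpha$, and solve for $T$. Your additional checks on the sign and well-definedness of the logarithm are a nice touch that the paper omits.
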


\begin{proof}
Firstly, note that since $a_\infty = b_\infty = \frac{d_0}{r}$, $\alpha$ must be less than $\frac{d_0}{r}$. Since we are observing the process at a finite time $T$, $d(T)$ is not zero. Hence, we should look at the value of $a(T)=b(T)+d(T)$ and set it to $\alpha$. We get,

\[a(T) = d_0 \big( \frac{1}{r} - ( \frac{1}{r}-1) e^{-rT} \big) = \alpha\]

Rearranging terms,we get the expression for $T(\alpha, d_0, \Gamma)$. 
\end{proof}

A more interesting question would be to determine the $d_0$ to be chosen so that by time $T$ we will have at least $\alpha$ fraction of the nodes activated, in the HILT network with parameter $\Gamma$. Unfortunately, we will not be able to get a closed form expression for this, and it can be solved numerically using the following fixed point equation.

\[ e^{-rT} = \frac{1-\frac{\alpha}{d_0} r}{1-r} \]

We can use the \emph{iterative bisection method} obtain the fixed point of the above equation. Let $F(d_0) = e^{-rT}$ and $G(d_0)=\frac{1-\frac{\alpha}{d_0} r}{1-r}$. We know that $d_0^{\star}$ that solves $F(d_0)=G(d_0)$ will lie in $[\frac{\alpha (1 -\Gamma)}{1 - \alpha \Gamma}, 1]$ and that the solution is unique, since $a(T)$ is a monotonic function in $d_0$. We also know that for $d_0 < d_0^{\star}$, $F(d_0) > G(d_0)$ and for $d_0 > d_0^{\star}$, $F(d_0) < G(d_0)$. 

Under the above conditions, we find that the bisection method will converge to $d_0^{\star}$. This is shown as Algorithm \ref{algo:bisection}. The method is illustrated in Figure~\ref{fig:bisection} for parameters $\Gamma = 0.8$, $\alpha = 0.7$, $T = 15$.

\begin{algorithm}
\LinesNumbered
$d_0^{min} = \frac{\alpha (1 -\Gamma)}{1 - \alpha \Gamma}$\;
$d_0^{max} = 1$\;
\While{1}
{
$x = (d_0^{min} + d_0^{max})/2$\;
\eIf{$F(x) - G(x) > 0$}
{
 $d_0^{min} = x$\;
}
{
$ d_0^{max} = x$\;
}
\If{$|F(x) - G(x)| < \epsilon$}
{
break\;
}
}
$d_0^{\star} = x$\;
\caption{Iterative Bisection method}
\label{algo:bisection}
\end{algorithm}

\begin{figure}[t]
\centerline{\includegraphics[scale=0.3]{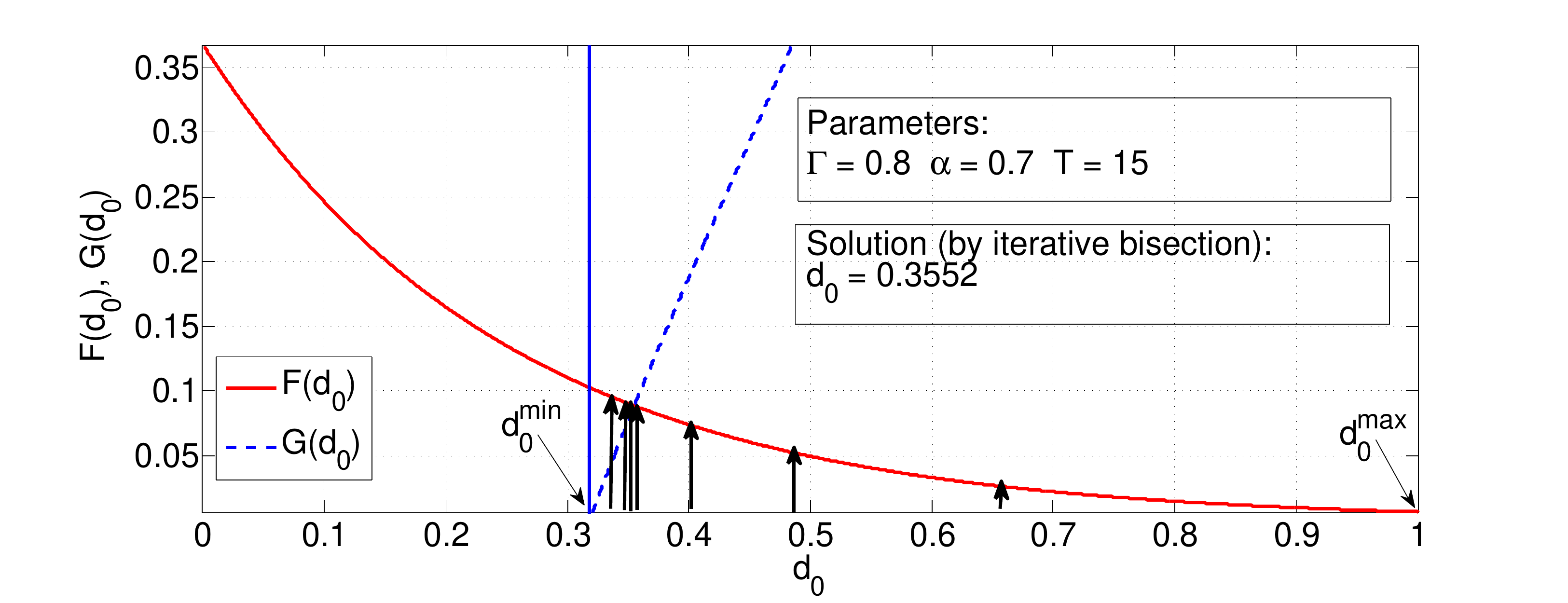}}
\caption{Evaluating $d_0$ by Iterative Bisection Method} 
\label{fig:bisection}
\end{figure}

The variation of $d_0^{\star}$ with respect to the parameters $\alpha$, $\Gamma$ and $T$ can be seen in Figures~\ref{fig:d_0_fixed_alpha_varying_Gamma_across_T},\ref{fig:d_0_fixed_Gamma_varying_alpha_across_T},\ref{fig:d_0_fixed_T_varying_Gamma_across_alpha}. 

In Figure~\ref{fig:d_0_fixed_alpha_varying_Gamma_across_T}, note that for $\Gamma = 0$, as expected, $d_0 = \alpha$, i.e., since there is no social interaction ($\Gamma = 0$), our terminal spread of influence will be equal to the initial seeding. Also, note that as the target time $T$ is reduced, we require higher values of $d_0$ to achieve the same $\alpha$ (for $T=0$, $d_0 = \alpha$). Finally, for $\Gamma =1$, $d_0$ asymptotically approaches $0$ for large $T$. From Figure~\ref{fig:d_0_fixed_Gamma_varying_alpha_across_T}, we see that as $\alpha$ increases, for a given $T$, the required $d_0$ monotonically increases. Finally, Figure~\ref{fig:d_0_fixed_T_varying_Gamma_across_alpha}, shows that the $d_0 \ \mathtt{vs} \ \alpha$ behavior for $T=8$ is qualitatively similar to the one in Figure~\ref{fig:hilt_k_3000} depicting $d_0 \ \mathtt{vs} \ b_\infty$.

\begin{figure}[t]
\centerline{\includegraphics[scale=0.17]{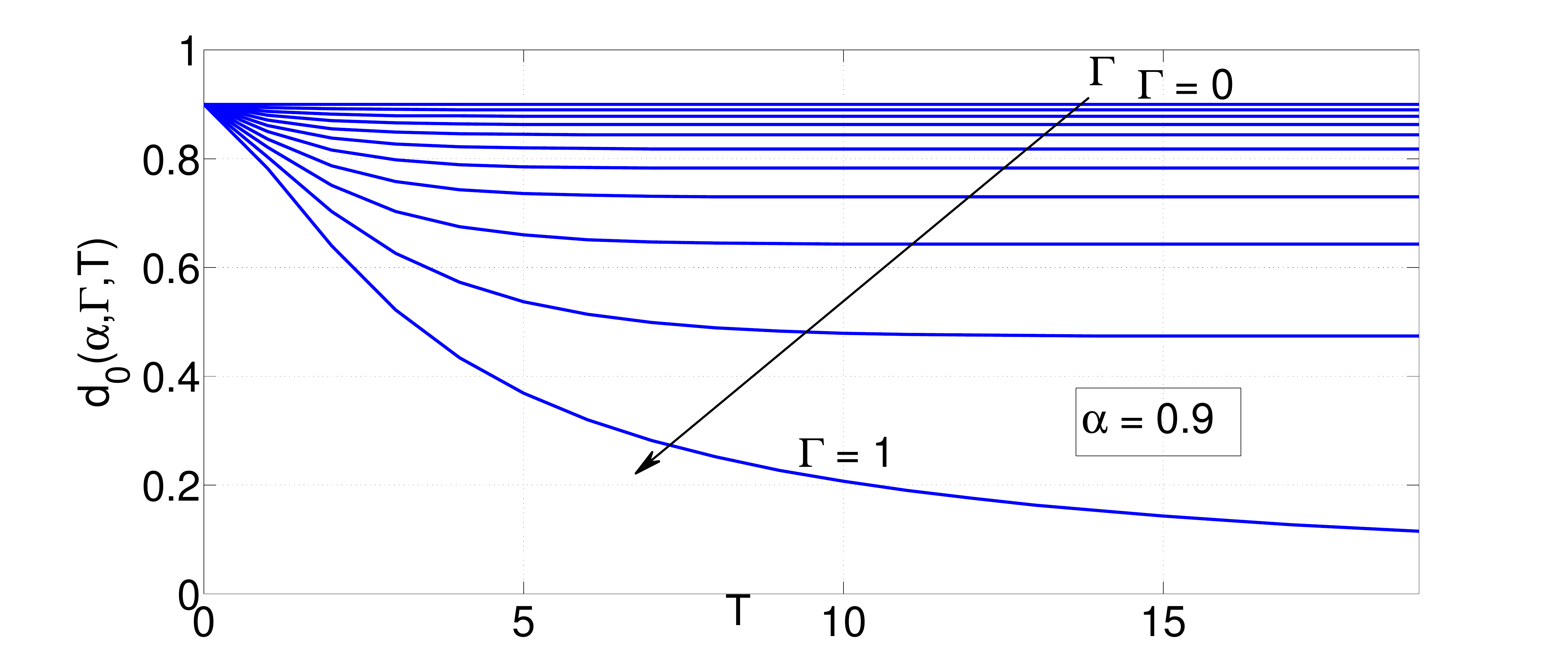}}
\caption{Variation of $d_0^{\star}$ across $T$ for various values of $\Gamma$ with $\alpha=0.9$ } 
\label{fig:d_0_fixed_alpha_varying_Gamma_across_T}
\end{figure}

\begin{figure}[t]
\centerline{\includegraphics[scale=0.17]{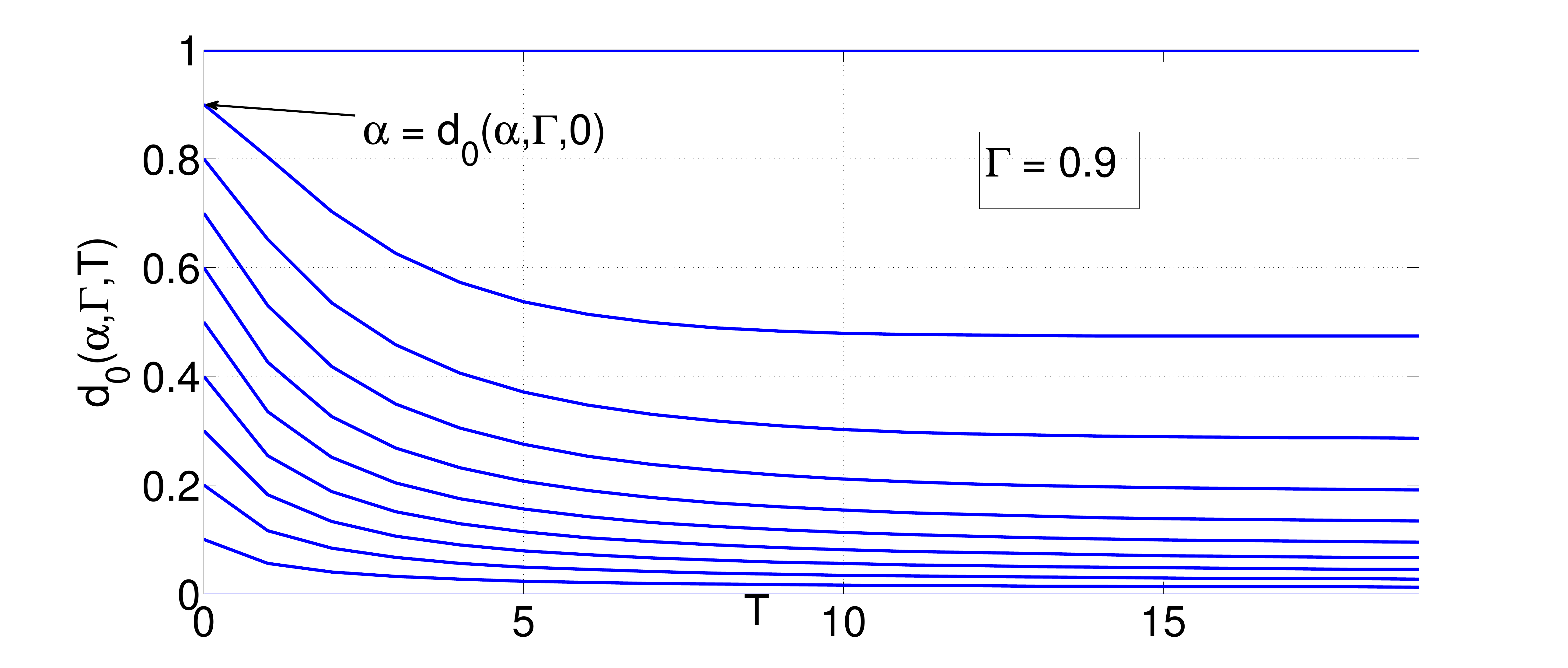}}
\caption{Variation of $d_0^{\star}$ across $T$ for various values of $\alpha$ with $\Gamma=0.9$}
\label{fig:d_0_fixed_Gamma_varying_alpha_across_T}
\end{figure}

\begin{figure}[t!]
\centerline{\includegraphics[scale=0.17]{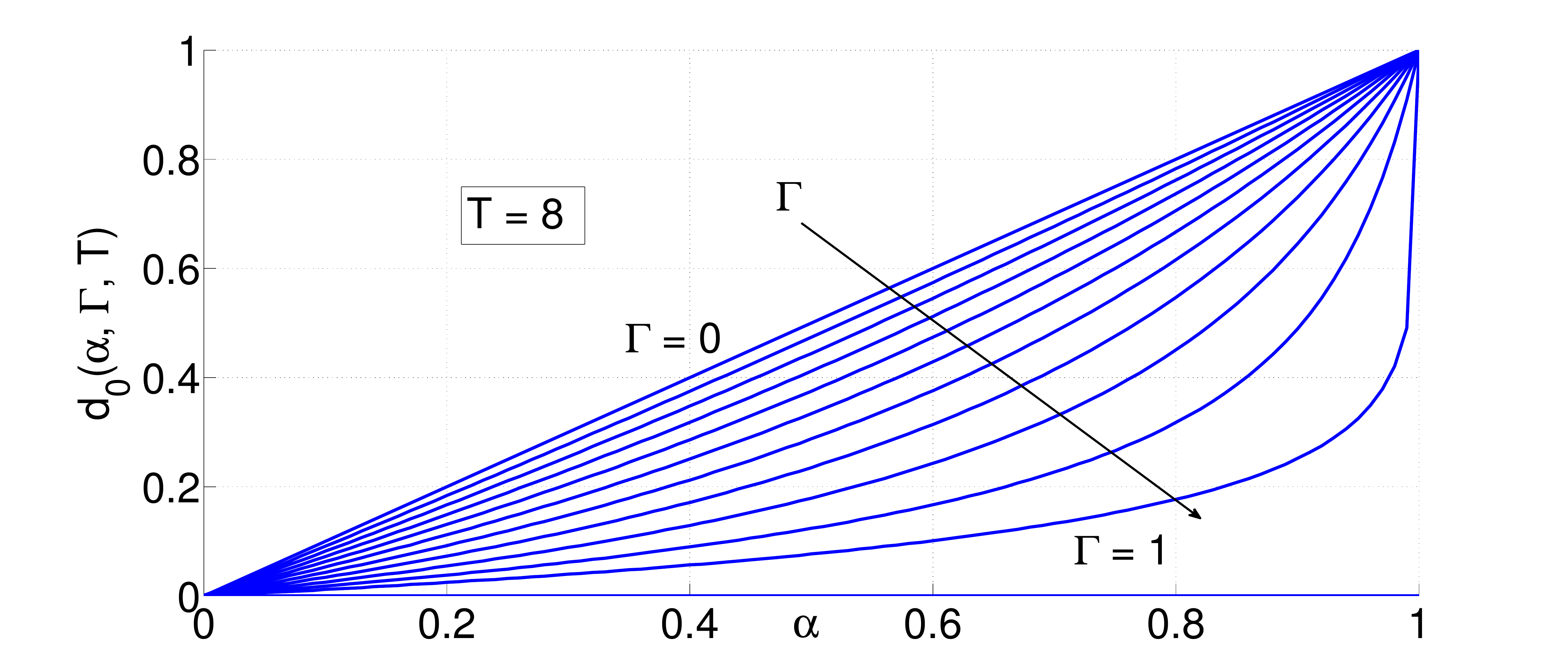}}
\caption{Variation of $d_0^{\star}$ across $\alpha$ for various values of $\Gamma$ at $T=8$}
\label{fig:d_0_fixed_T_varying_Gamma_across_alpha}
\end{figure}

\section{Effect of the Threshold distribution}
\label{sec:threshold}
Recall that the evolution of influence is given by:
\[ \dot{b} = d\]
\[ \dot{d} = h_F(\Gamma b) \Gamma d  (1-b-d) - d \]
Note that the evolution depends on the distribution of threshold via its hazard function,
\[ h(x) = \frac{f(x)}{1-F(x)}\] 
where $f(\cdot)$ and $F(\cdot)$ are the probability density and cumulative distribution functions of the threshold distribution, respectively. Hazard functions are widely used in failure/survival analysis. In this section, we will consider threshold distributions with different hazard function characteristics, and study the spread of influence.

As indicated earlier, the o.d.e. derived is valid for any $\Gamma > 0$, and in this Section, we will also consider cases when $\Gamma > 1$, while discussing threshold distributions with unbounded support. However, for uniform threshold distribution, we will restrict $\Gamma \leq 1$, since under this case $h_F(x) = \frac{1}{1-x}$, valid only for $x \in [0,1]$. 

\subsection{Exponential distribution}
Exponential distribution is widely used in scenarios where there is need for a constant hazard rate. This is also due to the fact that exponential distribution is the only memoryless continuous distribution. Consider the threshold $\theta_i$ distributed as exponential with parameter $\lambda$. We have 
\[f(x;\lambda)  = \lambda e^{-\lambda x}, \ x \geq 0 \]
\[F(x;\lambda)  = 1 - e^{-\lambda x}, \ x \geq 0 \]
Thus we get $h_F(x) = \lambda$. Plugging this in the o.d.e. expression we get, 
\[ \dot{b} = d\]
\[ \dot{d} = -d + \lambda \Gamma d (1 -b - d) \]
Observe that the above system of o.d.e. is equivalent to the dynamics of an SIR \emph{(Susceptible-Infective-Recovered)} epidemic, with infection rate $\lambda \Gamma$ and recovery rate $1$ \cite{daley-gani99epidemic-modeling}. The $b(t)$ and $d(t)$ processes respectively are equivalent to the \emph{Recovered} and \emph{Infective} processes of the SIR epidemic model. Thus we see that the under exponential distribution of threshold, the Linear Threshold model, in its fluid limit, is equivalent to a special case of the SIR model. This equivalence provides a hitherto undocumented link between influence spread models from viral marketing literature (Linear Threshold model) and a traditional epidemic model (SIR model). 

Figures~\ref{fig:uniform_exponential} and \ref{fig:uniform_exponential_large_gamma} compare the influence evolution under uniform and exponential distribution of threshold. Note that for the same mean threshold ($\mathbf{E}\theta = 0.5$) and smaller value of $\Gamma$ (Figure~\ref{fig:uniform_exponential}), exponential case yields a larger terminal influence spread. This is because, under the exponential distribution, there are more nodes with threshold close to zero. This also explains the steeper increase of $b(t)$ for exponential distribution compared to the uniform distribution case. In fact, from the respective o.d.e.s it is clear that $\dot{a}(0)=\dot{b}(0)+\dot{d}(0)$ for uniform distribution, is half that of exponential distribution with the same mean. 

But, for larger values of $\Gamma$ (Figure~\ref{fig:uniform_exponential_large_gamma}), uniform distribution yields a larger terminal influence spread. This is because, in the uniform case, the thresholds are bounded above by $1$, while in the exponential case, the support set for thresholds is unbounded. Thus, under the uniform distribution, as $\Gamma$ approaches $1$, the terminal spread of influence approaches $1$ (as noted in Section~\ref{subsec:terminal-uniform}).

\begin{figure}[ht!] 
  \subfigure[small $\Gamma$ regime]{ 
    \includegraphics[scale=0.17]{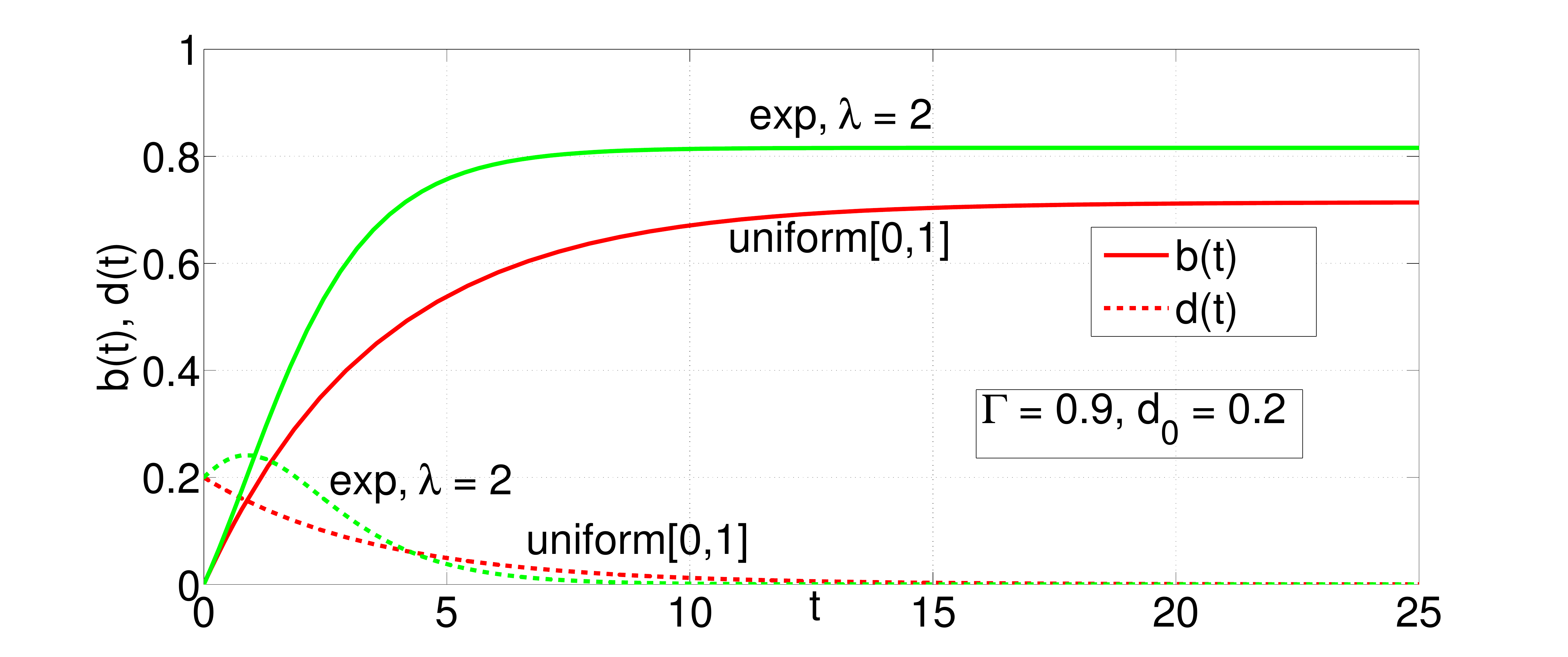} \label{fig:uniform_exponential}
  } 
  \quad 
  \subfigure[large $\Gamma$ regime]{
    \includegraphics[scale=0.17]{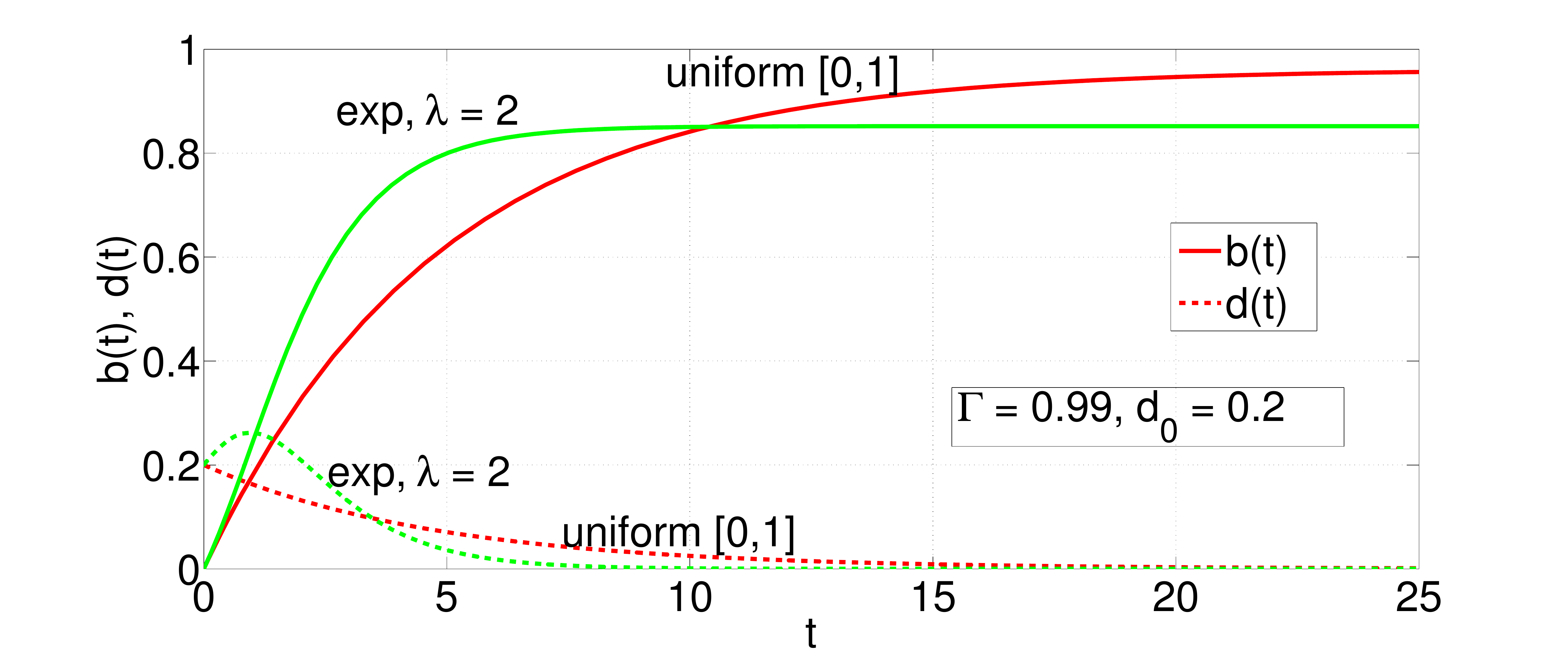} \label{fig:uniform_exponential_large_gamma}
  } 
  \caption{Comparison of influence spread between Uniform threshold distribution and Exponential threshold distribution with the same mean. We still use $\Gamma < 1$, since we are dealing with the uniform distribution} 
\end{figure}

\subsection{Weibull distribution}
Another distribution which is widely used in survival analysis is the Weibull distribution. The probability density function of a Weibull random variable is given by, 
\[ f(x: \lambda, k) = \frac{k}{\lambda} \bigg(\frac{x}{\lambda} \bigg)^{(k-1)} e^{-(\frac{x}{\lambda})^k} , \ x\geq 0 \]
If the random variable $X$ is the time to failure, then under the Weibull distribution,  the failure rate is proportional to a power of time. The hazard function is given by,
\[ h(x;  \lambda, k) = \frac{k}{\lambda} \bigg(\frac{x}{\lambda}\bigg)^{(k-1)}, \ x\geq 0 \]

In the above expression $\lambda$ is often referred to as the \emph{scale} parameter and $k$ is referred to as the \emph{shape} parameter. Figure~\ref{fig:weibull_f} shows the probability density function of Weibull distribution for various values of $k$. Note that for $k>1$, there are significantly high number of users with higher values of threshold, i.e., less susceptible to the spread of influence. The hazard rate for Weibull distribution can be increasing, constant or decreasing depending on the value of $k$. This is demonstrated in the Figure~\ref{fig:weibull_h}.

\begin{itemize}
 \item $k<1$ leads to decreasing hazard rate. This implies that nodes are less likely to become activated by an instantaneous influence, as the existing influence (which failed to activate the node) on them increases. 
 \item $k=1$ yields constant hazard rate, and in that case Weibull distribution is just the exponential distribution.
 \item $k>1$ yields an increasing hazard rate, which implies nodes are more likely to become activated by an instantaneous influence, as the existing influence on them increases.
\end{itemize}

\begin{figure}[ht!] 
  \subfigure[Probability density function]{ 
    \includegraphics[scale=0.17]{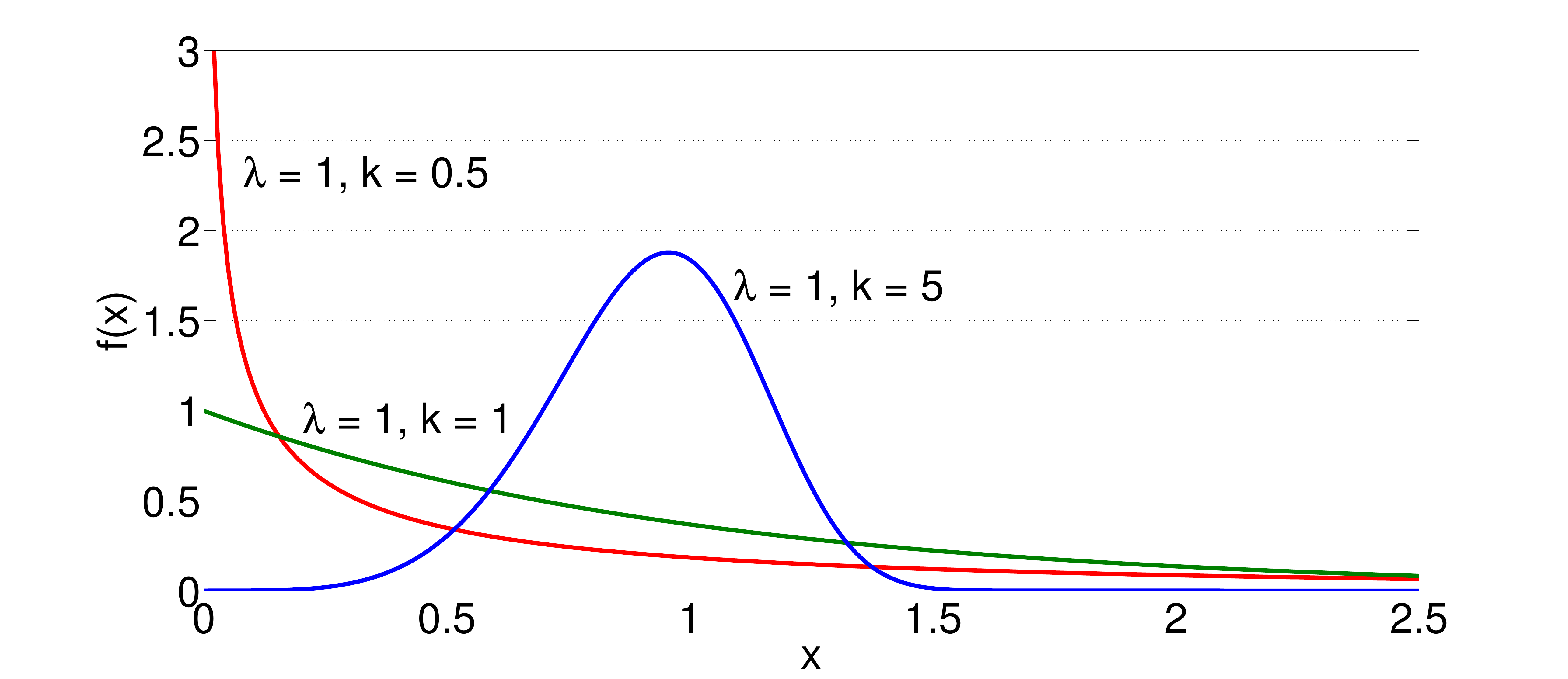} \label{fig:weibull_f}
  } 
  \quad 
  \subfigure[Hazard function]{
    \includegraphics[scale=0.17]{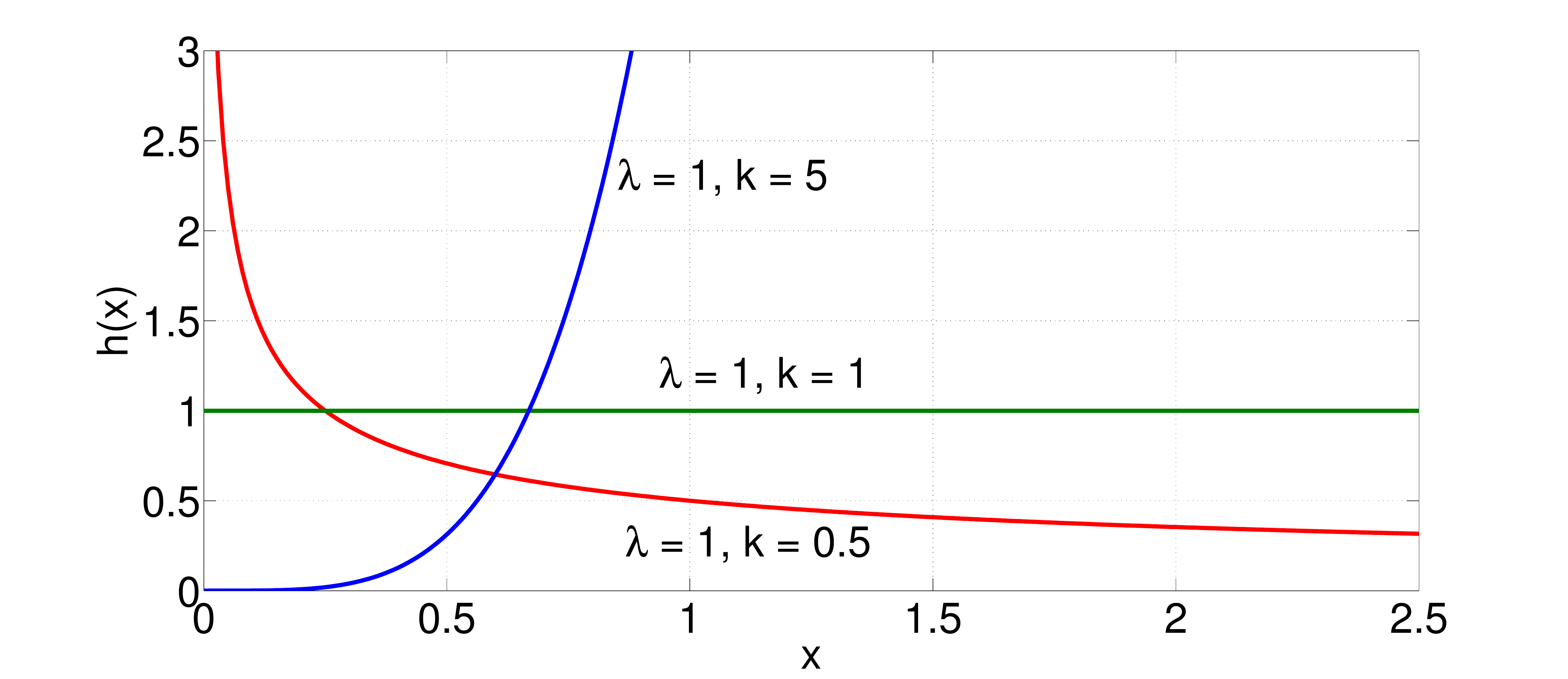} \label{fig:weibull_h}
  } 
  \caption{Weibull distribution for different values of $k$} 
\end{figure}

The HILT o.d.e. under Weibull distribution of threshold can be written as follows:
\[\dot{b} = d\]
\[\dot{d} = -d + \Gamma d \frac{k}{\lambda} \bigg(\frac{\Gamma b}{\lambda}\bigg)^{k-1}(1-b-d)\]

Figures~\ref{fig:weibull_evol} and \ref{fig:weibull_evol_large_gamma} demonstrate the evolution of the o.d.e under the Weibull distribution of threshold, for different values of $k$ in the small and large regimes for $\Gamma$. For smaller $\Gamma$ (Figure~\ref{fig:weibull_evol}), we observe that as $k$ increases, the spread of influence decreases. This is expected, since from Figure~\ref{fig:weibull_f} it is clear that, for larger values of $k$, Weibull distribution puts more mass on larger values of threshold, i.e., nodes are less susceptible to influence. Further, for $k=5$, Figure~\ref{fig:weibull_evol} shows that the total spread of influence is $0.2$, equal to the initial seeding $d_0 = 0.2$. This implies the influence does not spread at all, since the node thresholds are much higher, compared to the net influence generated by $d_0$ (due to smaller $\Gamma$).

For larger $\Gamma$ (Figure~\ref{fig:weibull_evol_large_gamma}), we see that the trend is reversed, i.e., as $k$ increases, the spread of influence increases. It is to be noted that the $\dot{b} = d$ near $0$ is larger for smaller $k$, similar to the small $\Gamma$ regime. However, from Figures~\ref{fig:weibull_f} and \ref{fig:weibull_h} we see that smaller values of $k$ have heavier tails (and lower hazard rates), thus leading to stagnation of influence after the initial surge. 

Another interesting feature to note is that, unlike the small $\Gamma$ regime, for $k=5$, we get a much higher influence spread. Also, unlike other values of $k$, here $\dot{b} = d$ exhibits a non-monotonic behavior even after it begins to decrease, i.e., $d(t)$ is not unimodal. Such behavior has not been observed until now in the classic epidemiology framework, especially in a homogeneous setting. In traditional epidemic models like \emph{SIR}, the \emph{I} process (equivalent to $\dot{b}$) might exhibit an initial increase, but once it begins decreasing, continues to steadily decrease to zero. But, in our dynamics, the presence of hazard rate (increasing, in this case) leads to such non-unimodal characteristics of $d(t)$.

\begin{figure}[ht!] 
  \subfigure[small $\Gamma$ regime]{ 
    \includegraphics[scale=0.17]{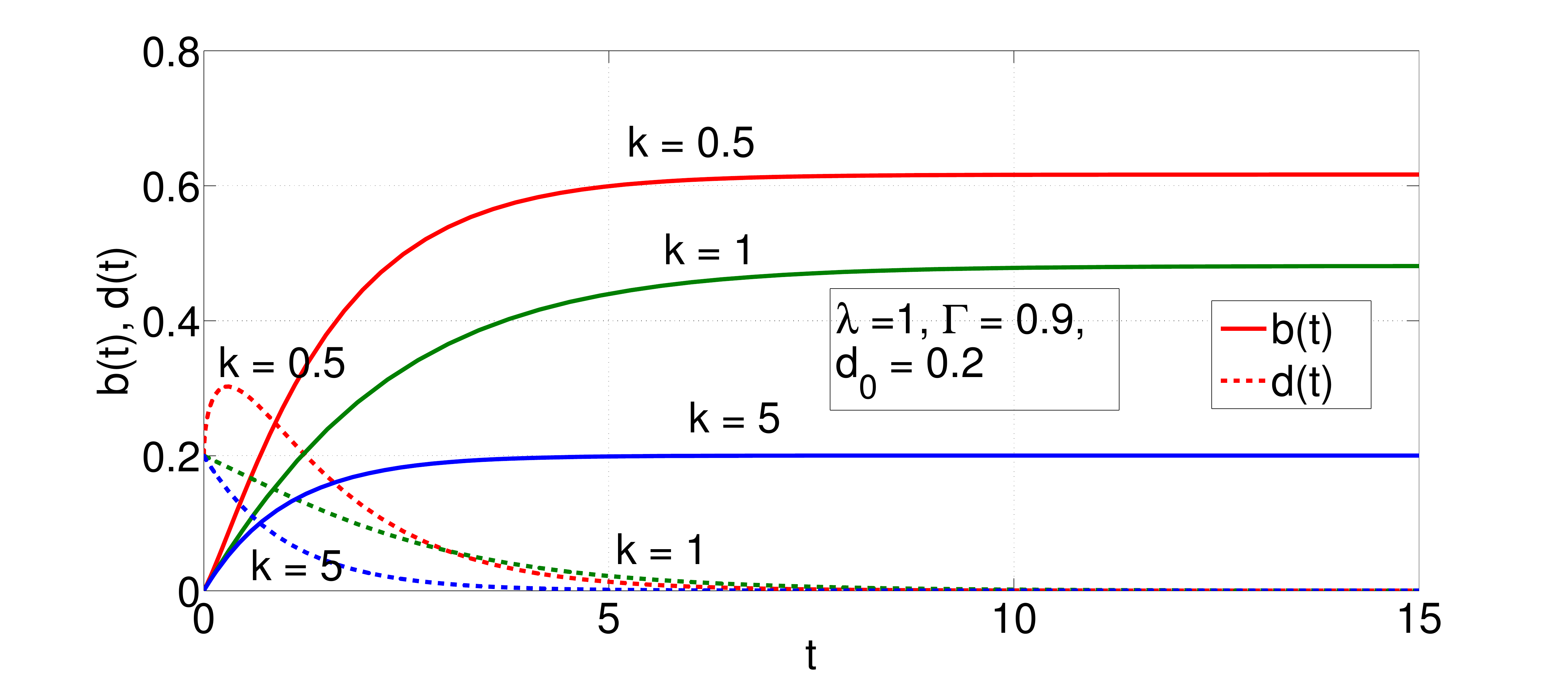} \label{fig:weibull_evol}
  } 
  \quad 
  \subfigure[large $\Gamma$ regime]{
    \includegraphics[scale=0.17]{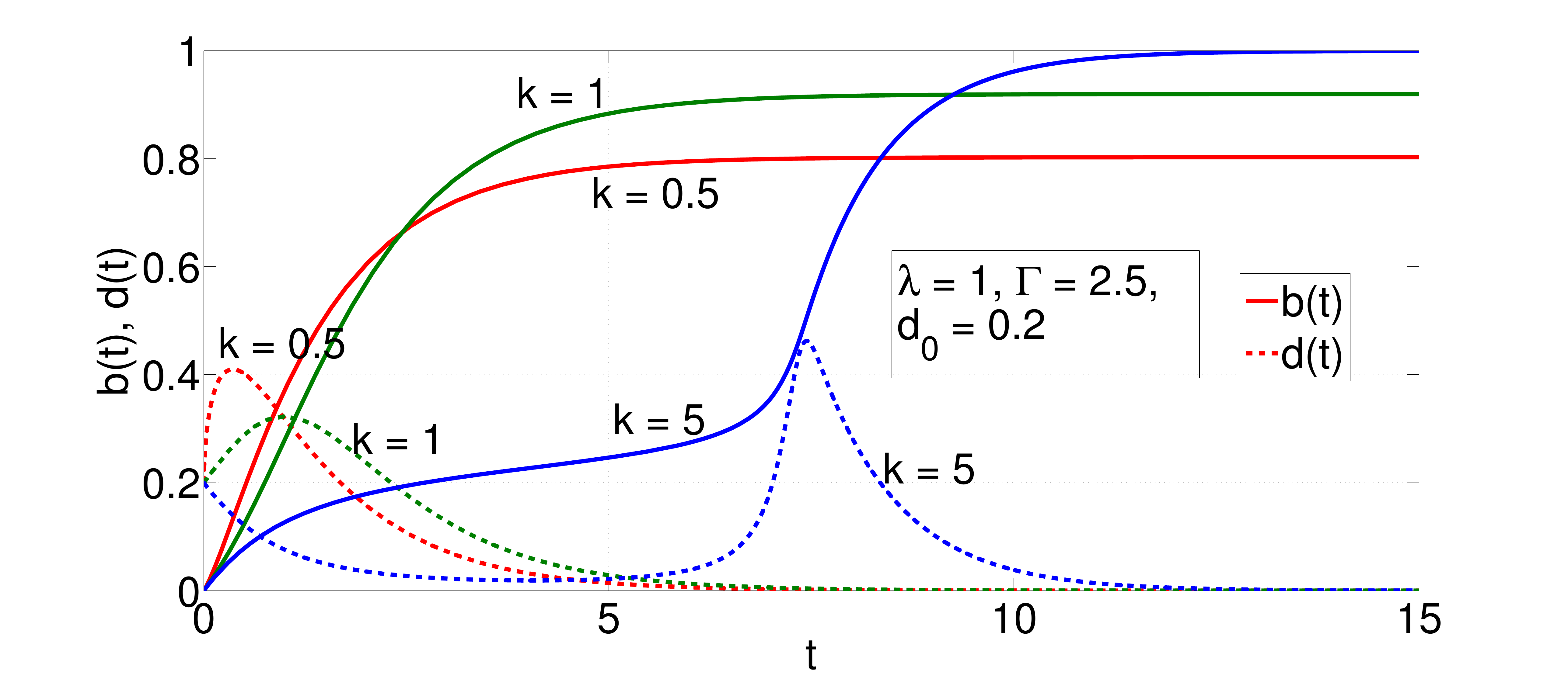} \label{fig:weibull_evol_large_gamma}
  } 
  \caption{Comparison of influence spread between Weibull threshold distributions with different values of $k$} 
\end{figure}

\subsection{Loglogistic distribution}
Loglogistic distribution is the probability distribution of a random variable who logarithm follows the logistic distribution. It has similar shape characteristics to log-normal distribution, but has heavier tails. The probability density function and the hazard function are given by, 

\[ f(x; \alpha, \beta) = \frac{(\frac{\beta}{\alpha})(\frac{x}{\alpha})^{\beta-1}}{(1 + (\frac{x}{\alpha})^\beta)^2}, \ x \geq 0 \]

\[ h(x; \alpha, \beta) = \frac{\beta}{\alpha} \bigg[ \frac{(\frac{x}{\alpha})^{\beta-1}}{1 + (\frac{x}{\alpha})^\beta} \bigg] , \ x \geq 0 \]

The parameter $\alpha$ functions as the scale parameter and $\beta$ is referred to as the shape parameter. Also for $\beta > 1$, the distribution is unimodal, and is more concentrated as $\beta$ increases (see Figure~\ref{fig:loglogistic_f}).

Similar to the Weibull distribution, one can obtain different failure characteristics by tuning the $\beta$ parameter. For $\beta \leq 1$, the hazard rate decreases monotonically. But unlike the Weibull distribution, for $\beta >1$, the hazard function exhibits non-monotonic behavior (see Figure~\ref{fig:loglogistic_h}). 

\begin{figure}[ht!] 
  \subfigure[Probability density function]{ 
    \includegraphics[scale=0.17]{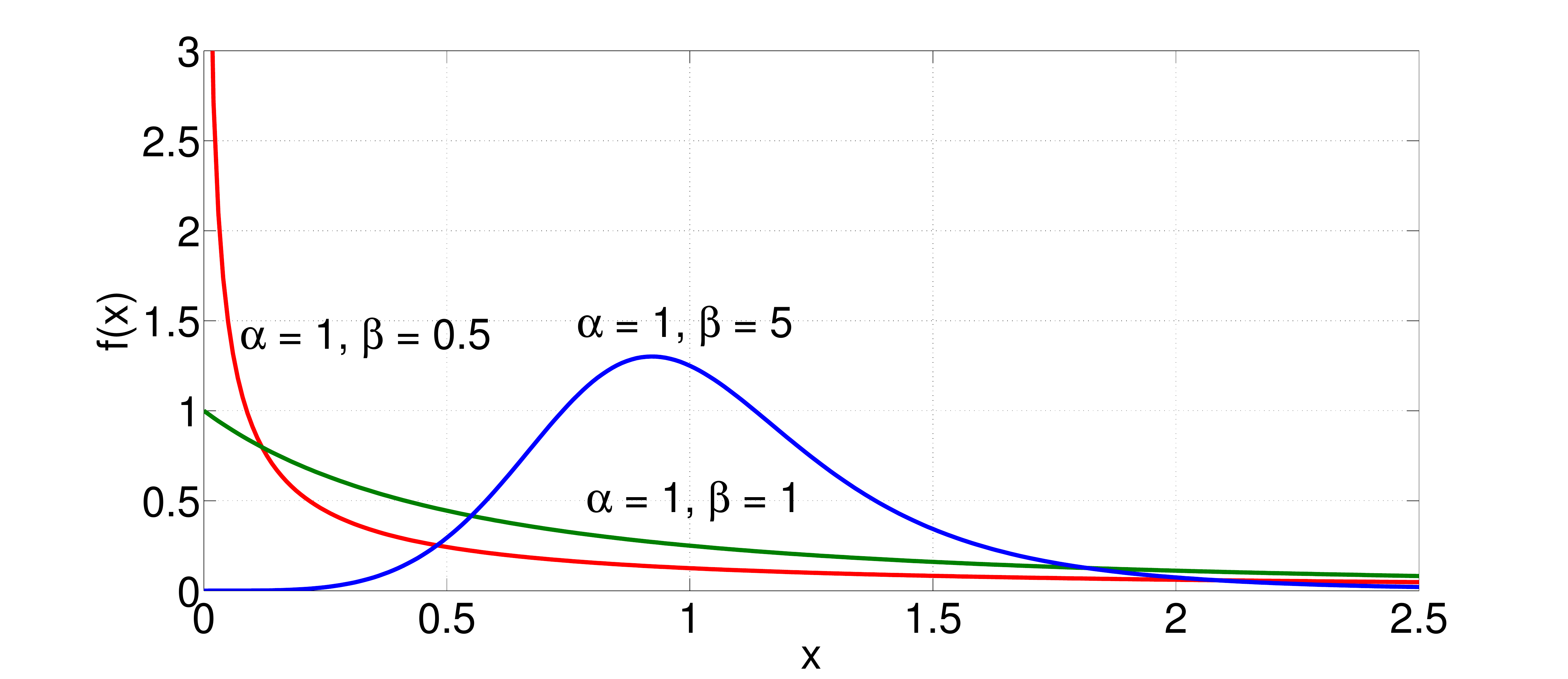} \label{fig:loglogistic_f}
  } 
  \quad 
  \subfigure[Hazard function]{
    \includegraphics[scale=0.17]{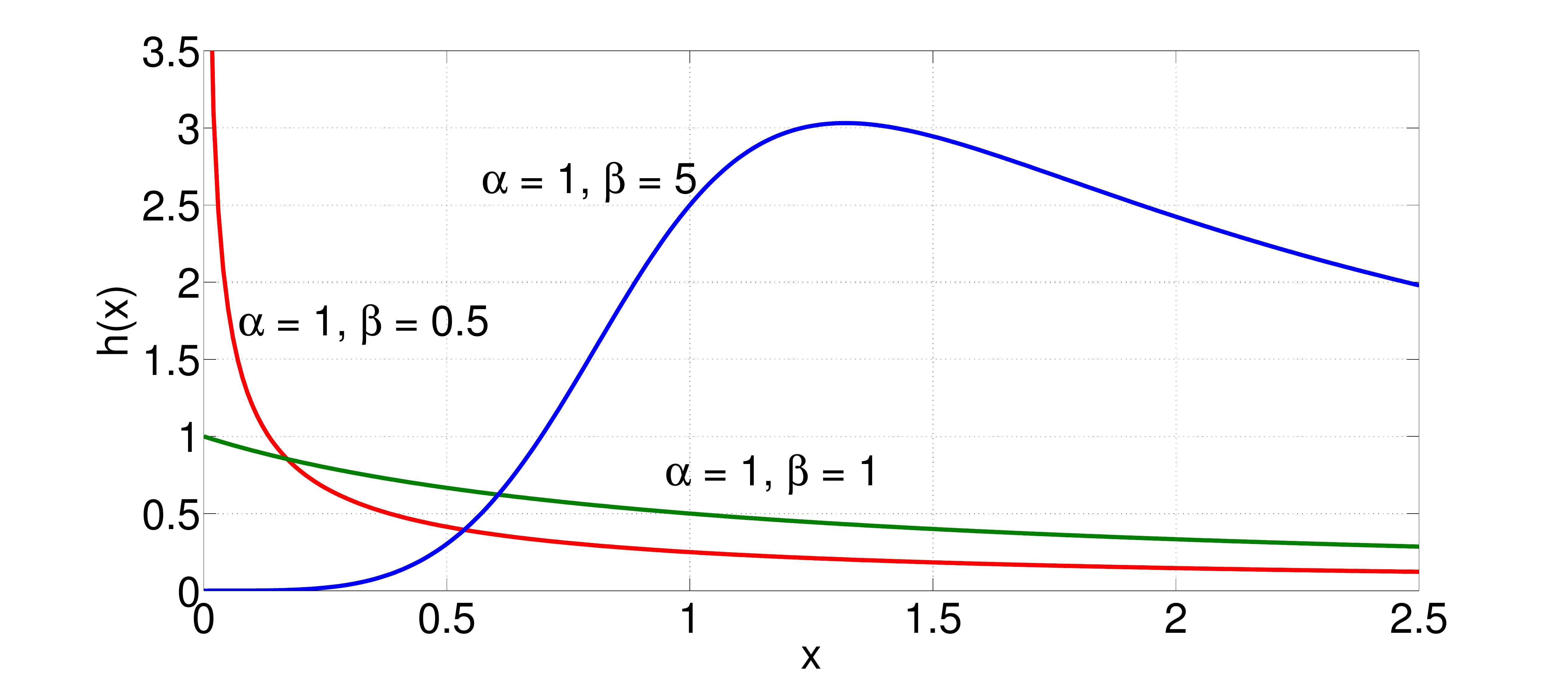}\label{fig:loglogistic_h}
  } 
  \caption{loglogistic distribution for different values of $k$} 
\end{figure}

\pagebreak
The HILT o.d.e. under loglogistic distribution of threshold can be written as follows:
\[\dot{b} = d\]
\[\dot{d} = -d + \Gamma d \frac{\beta}{\alpha} \bigg[ \frac{(\frac{\Gamma b}{\alpha})^{\beta-1}}{1 + (\frac{\Gamma b}{\alpha})^\beta} \bigg] (1-b-d)\]

Figures~\ref{fig:loglogistic_evol} and \ref{fig:loglogistic_evol_large_gamma} demonstrate the evolution of the influence spread o.d.e. under the loglogistic distribution of threshold, for different values of $k$ in the small and large regimes for $\Gamma$. We note that for small $\Gamma$, the evolution of influence is qualitatively similar, but under the loglogistic distribution, we get a smaller influence spread, due to heavier tails. Also, in the large $\Gamma$ regime, we note that for $\beta=5$, we again get a non-unimodal behavior for $d(t)$. But the second peak is less pronounced in the loglogistic distribution than the Weibull distribution, since the loglogistic distribution exhibits a non-monotonic hazard rate.

\begin{figure}[ht!] 
  \subfigure[small $\Gamma$ regime]{ 
    \includegraphics[scale=0.17]{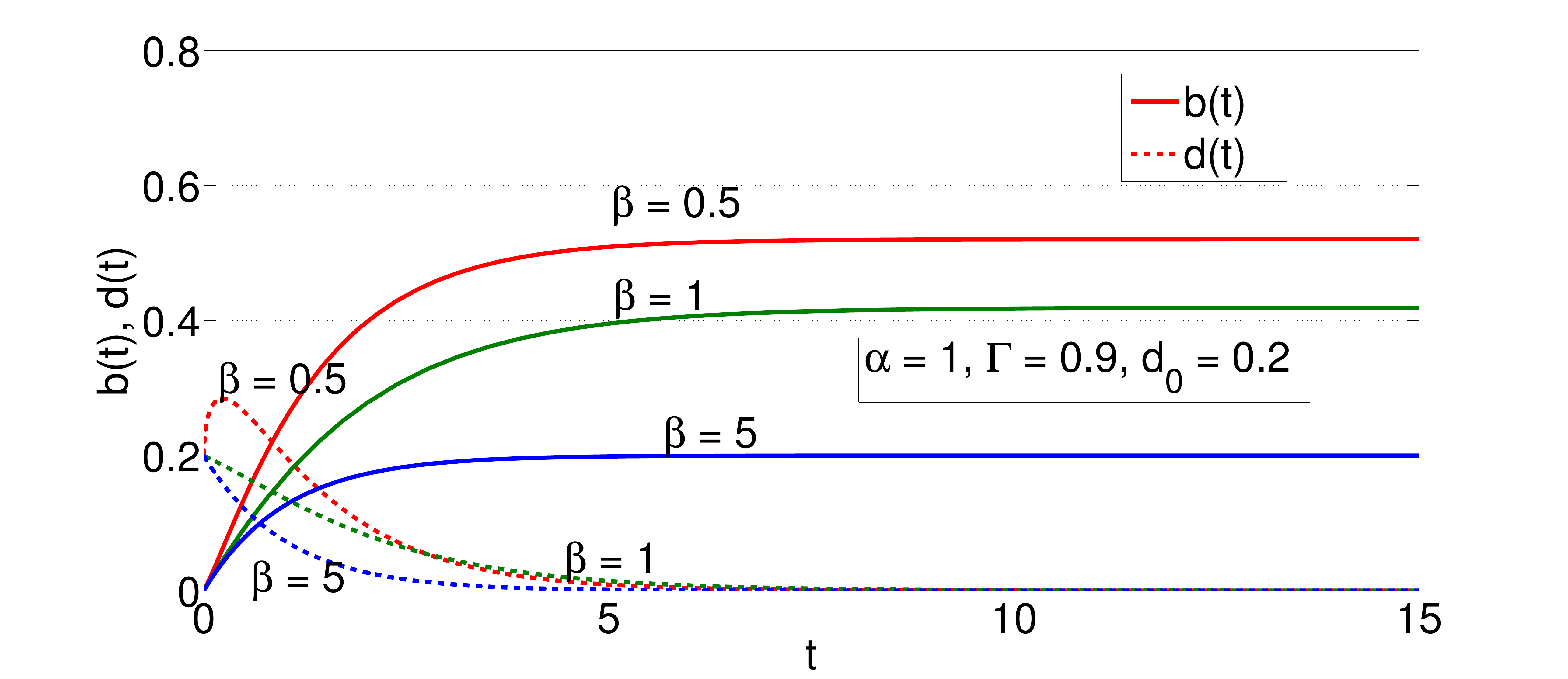} \label{fig:loglogistic_evol}
  } 
  \quad 
  \subfigure[large $\Gamma$ regime]{
    \includegraphics[scale=0.17]{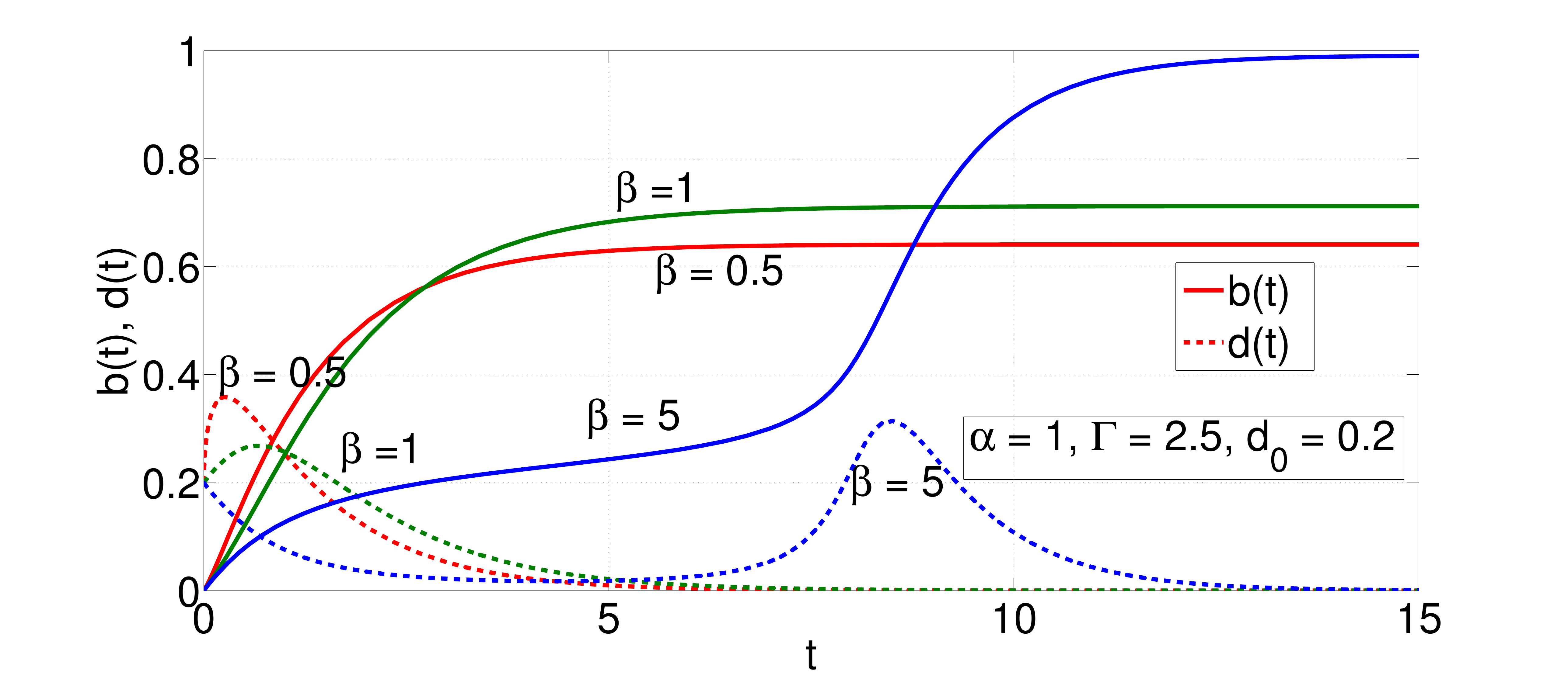}\label{fig:loglogistic_evol_large_gamma}
  } 
  \caption{Comparison of influence spread between loglogistic threshold distributions with different values of $k$} 
\end{figure}

% \begin{figure}
% \centerline{}
% \caption{Comparison of influence spread between loglogistic threshold distributions with different values of $k$, in the .}
% 
% \end{figure}
% 
% \begin{figure}
% \centerline{}
% \caption{, in the .}
% 
% \end{figure}

Thus we see that, the incorporation of hazard rate into the o.d.e. (resulting from a fluid limit characterization of the LT model) yields qualitatively different characteristics compared to the standard epidemic models. To the best of our knowledge, this is the first work that analytically characterizes the evolution of influence under different threshold distributions. This is also the first work to incorporate hazard functions into the epidemic models, thus providing a way to capture heterogeneity in the population. Further, due to the one-one correspondence between a given hazard function and its corresponding cumulative distribution \cite{cox61renewal-theory}, one can begin with the hazard function in the o.d.e. (obtained by curve-fitting to existing epidemic data) and ascertain the threshold distribution of the population.  

\section{Multiclass HILT model}
\label{sec:multiclass}
A natural extension to the HILT model would be to consider the evolution of information spread in an heterogeneous network. Such a scenario might arise in a network with communities, where the interactions within a community might be stronger than the interaction across communities. These have been traditionally studied under the term \emph{stratified epidemics} \cite{watson72epidemic}. Consider a network with $M$ communities $(\mathcal{C}_i)_{i=1}^{M}$ and let $(N_i)_{i=1}^{M}$ denote the number of nodes in each community. Let $\mathcal{G}$ be the influence matrix, whose entries $g_{i,j}$ indicates the strength of influence from community $i$ to community $j$ (see Figure~\ref{fig:multiclass}).

\begin{figure}
\centerline{\includegraphics[scale=0.4]{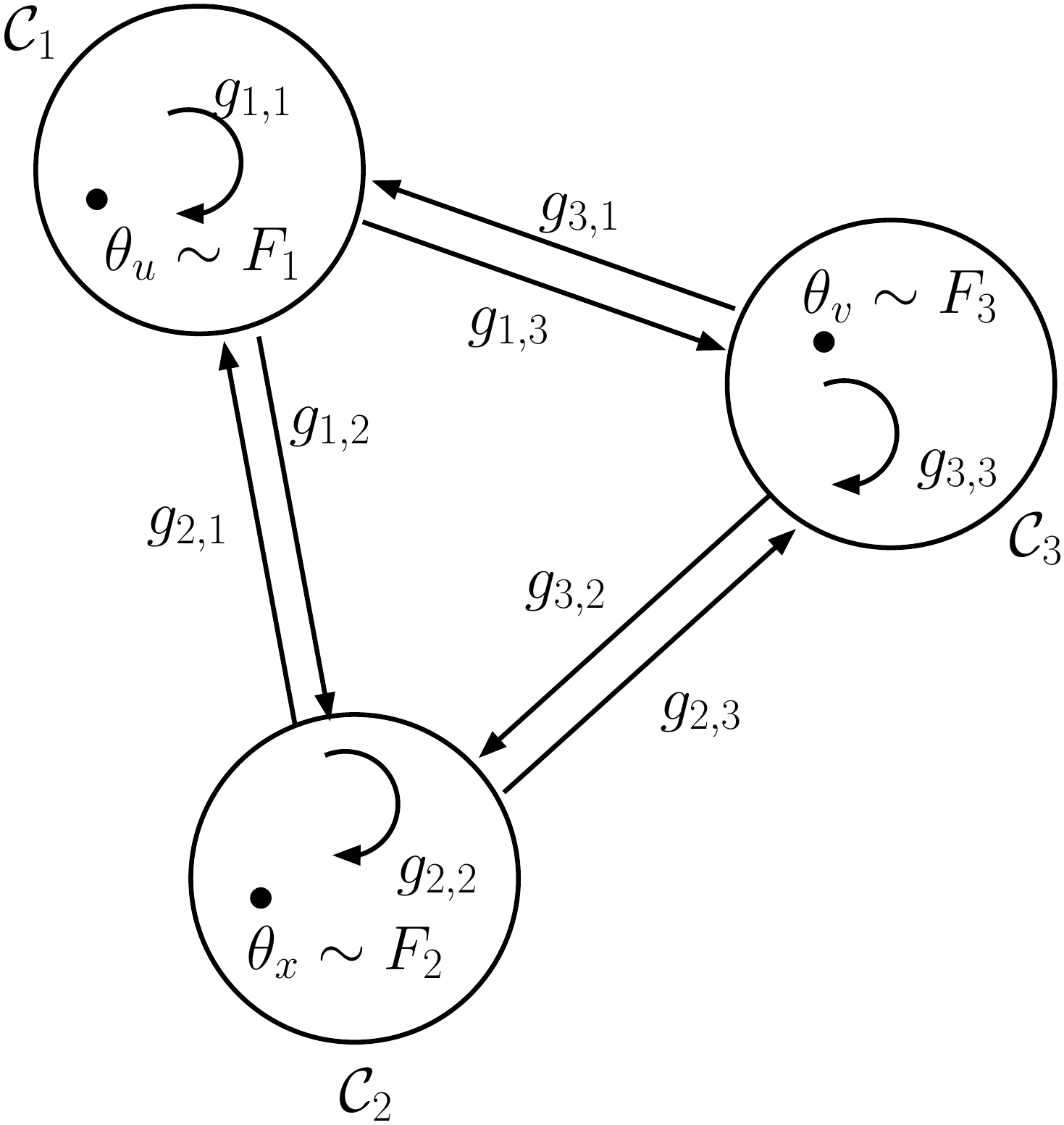}}
\caption{A heterogeneous network with three communities, shown with entries of the influence matrix $\mathcal{G}$. Nodes $u$, $x$ and $v$ belong to communities $\mathcal{C}_1$, $\mathcal{C}_2$ and $\mathcal{C}_3$, and have their thresholds distributed according to $F_1(\cdot)$, $F_2(\cdot)$ and $F_3(\cdot)$ respectively.}
\label{fig:multiclass}
\end{figure}

As earlier, we will appropriately normalize the edge weights, i.e., for $u \in \mathcal{C}_i$ and $v \in \mathcal{C}_j$, $w_{u,v} = \frac{g_{i,j}}{N}$, where $N = \sum_{i} N_i$ is the total population size. Let the nodes within community $i$ have their thresholds distributed according to $F_i$, with hazard function $h_{F_i}$. We can then carry out an analysis similar to what was done for the HILT model in Section~\ref{sec:ode-approx}. We can show that the joint evolution is a Markov process, and we construct a scaled process using the minislots approach and with appropriate probability scaling. Here the attempt probability of all infectious nodes during a given mini-slot scales as $\frac{1}{N}$, irrespective of which community they belong to. By applying Kurtz's theorem to the scaled process, we obtain the o.d.e.s representing the influence evolution. Let $(b_i(t), d_i(t))$ denote the non-infectious and infectious active nodes within community $i$. We can then describe their evolution by the following system 
of o.d.e.s similar to Equations~\ref{eqn:b_limit} and \ref{eqn:d_limit} ($1 \leq i \leq M$):
\[ \dot{b}_i = d_i\]
\[ \dot{d}_i = -d_i + [ \mathcal{G}^T \mathbf{d} ]_i \ h_{F_i} ( [\mathcal{G}^T \mathbf{b}]_i ) (n_i - b_i - d_i)\]
where $\mathbf{b} = (b_1, b_2, \cdots b_m)^T$, $\mathbf{d} = (d_1, d_2, \cdots d_m)^T$ and $n_i = \lim_{N \to \infty} \frac{N_i}{N}$.
One possible objective function to maximize in this scenario would be the total spread of influence $\sum_{i} b_i (\infty) $, by suitably choosing the initial $\mathbf{d}(0)$ subject to the constraint $\sum_i d_i(0) = d_0$, for fixed system parameters, i.e., the threshold distributions $F_i$ and the influence matrix $\mathcal{G}$. We were unable to obtain a universal analytical solution for this problem, but numerically demonstrate that depending the system parameters the results could be quite counter-intuitive. 
\begin{figure}
\centerline{\includegraphics[scale=0.17]{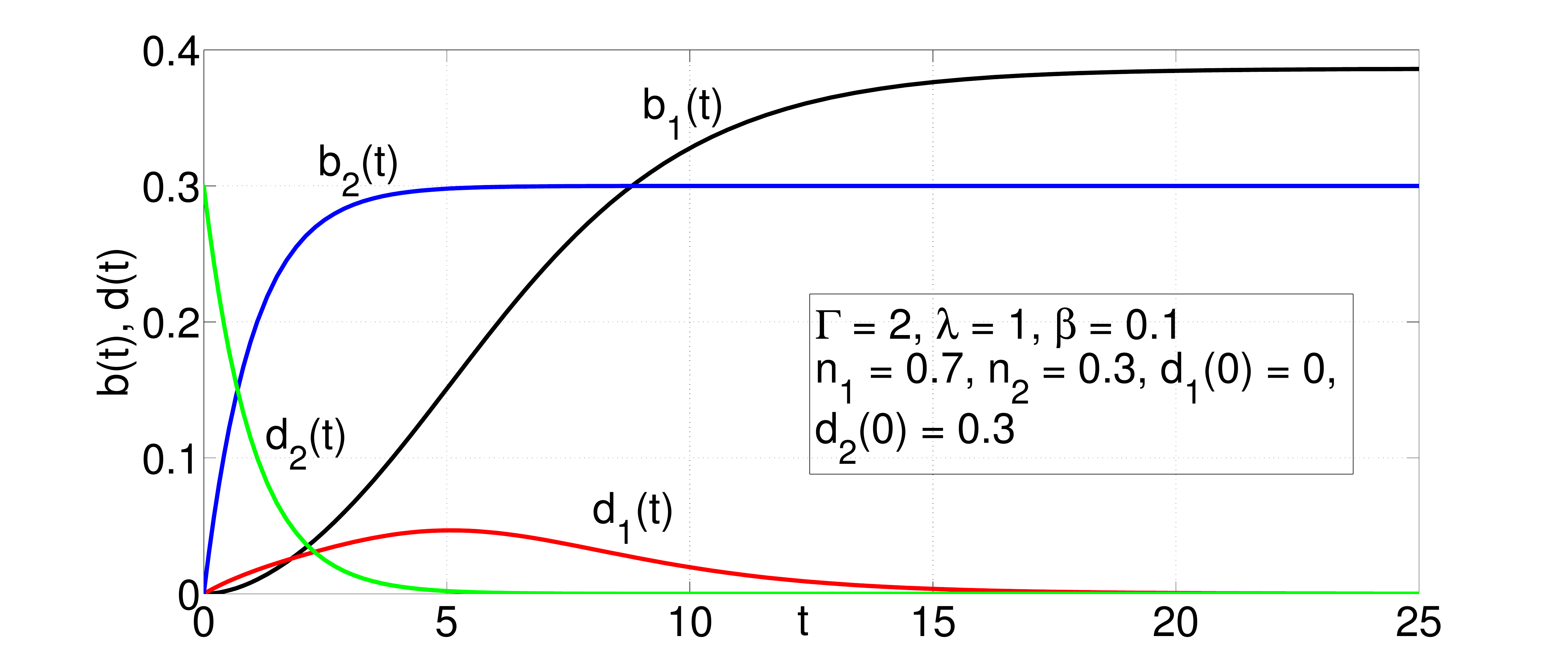}}
\caption{Evolution of influence in the two communities when the initial seeding is done in the smaller community  (i.e., $d_1(0) = 0, d_2(0) = d_0 = 0.3$ )}
\label{fig:multiclass_invest_smaller}
\end{figure}

\begin{figure}
\centerline{\includegraphics[scale=0.17]{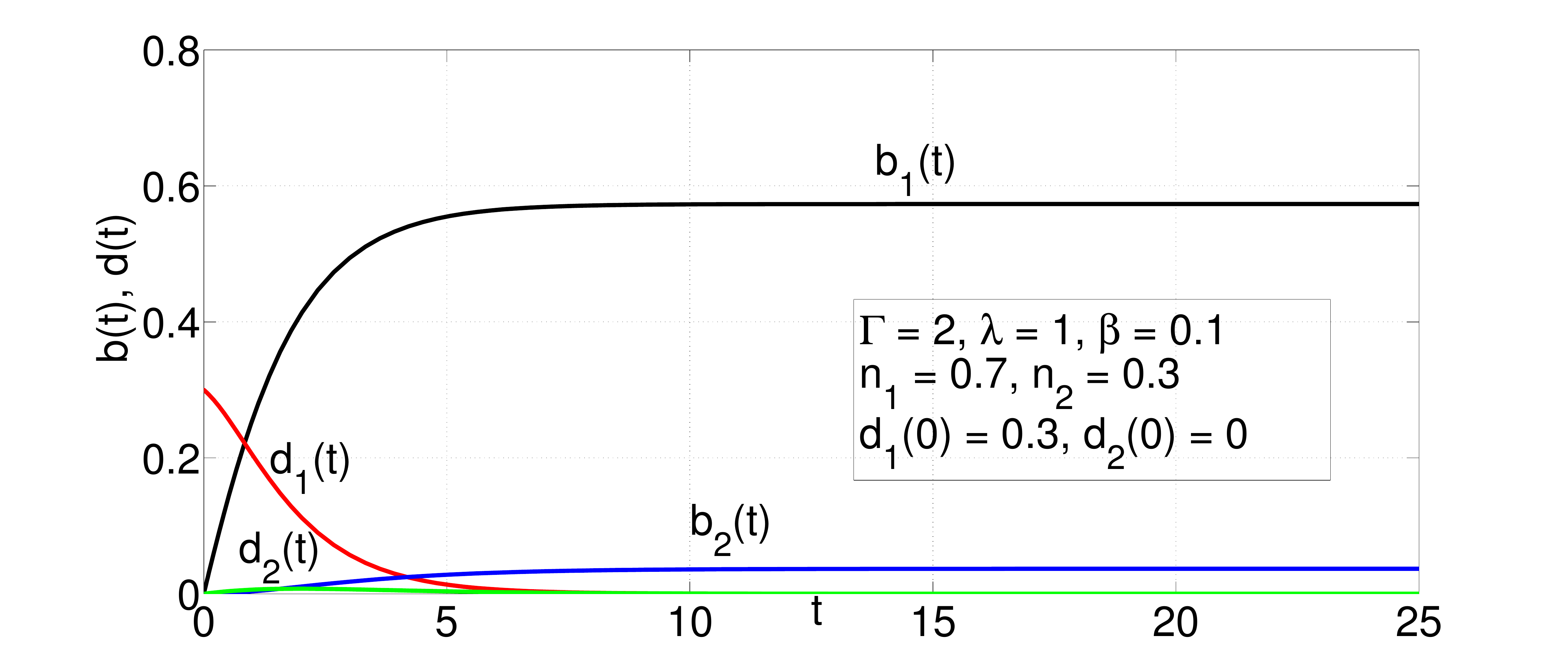}}
\caption{Evolution of influence in the two communities when the initial seeding is done in the larger community  (i.e., $d_1(0) = d_0 = 0.3, d_2(0) = 0$ )}
\label{fig:multiclass_invest_larger}
\end{figure}

For instance, consider a two community network with $N_1 = 0.7 N$ and $N_2 = 0.3 N$ as the relative community sizes. Let all the nodes in the population have their thresholds distributed according to an exponential distribution with parameter $\lambda$. Also assume that $g_{i,i} = \Gamma$ and $g_{i,j} = \beta$ for $i,j \in \{1,2\}$. Figures \ref{fig:multiclass_invest_smaller} and \ref{fig:multiclass_invest_larger} show the evolution of $(b_i(t), d_i(t))_{i}$ for $i = 1,2$, for different initial conditions. While in Figure~\ref{fig:multiclass_invest_smaller} (scenario 1) the entire initial seeding is done in the smaller community (i.e., $d_1(0) = 0, d_2(0) = d_0 = 0.3$ ), in Figure~\ref{fig:multiclass_invest_larger} (scenario 2) the entire initial seeding is done in the larger community. We see that, the total spread of influence in scenario 1 is larger than in scenario 2. Further from Figure \ref{fig:twoclass_n1_variation} it is clear that the optimal seeding for this setting is approximately $(d_1(0) = 0.1, 
d_2(0) = 0.2)$. It is surprising that we get a wider spread of influence by investing more in the smaller community. Thus we see that, even in a simple two community setting, the optimal seeding might be counter-intuitive. It would be an interesting future direction to analytically obtain the optimal seeding, given the influence matrix $\mathcal{G}$ and the threshold distributions $\{F_i\}_{1\leq i \leq m}$. 

\begin{figure}
\centerline{\includegraphics[scale=0.17]{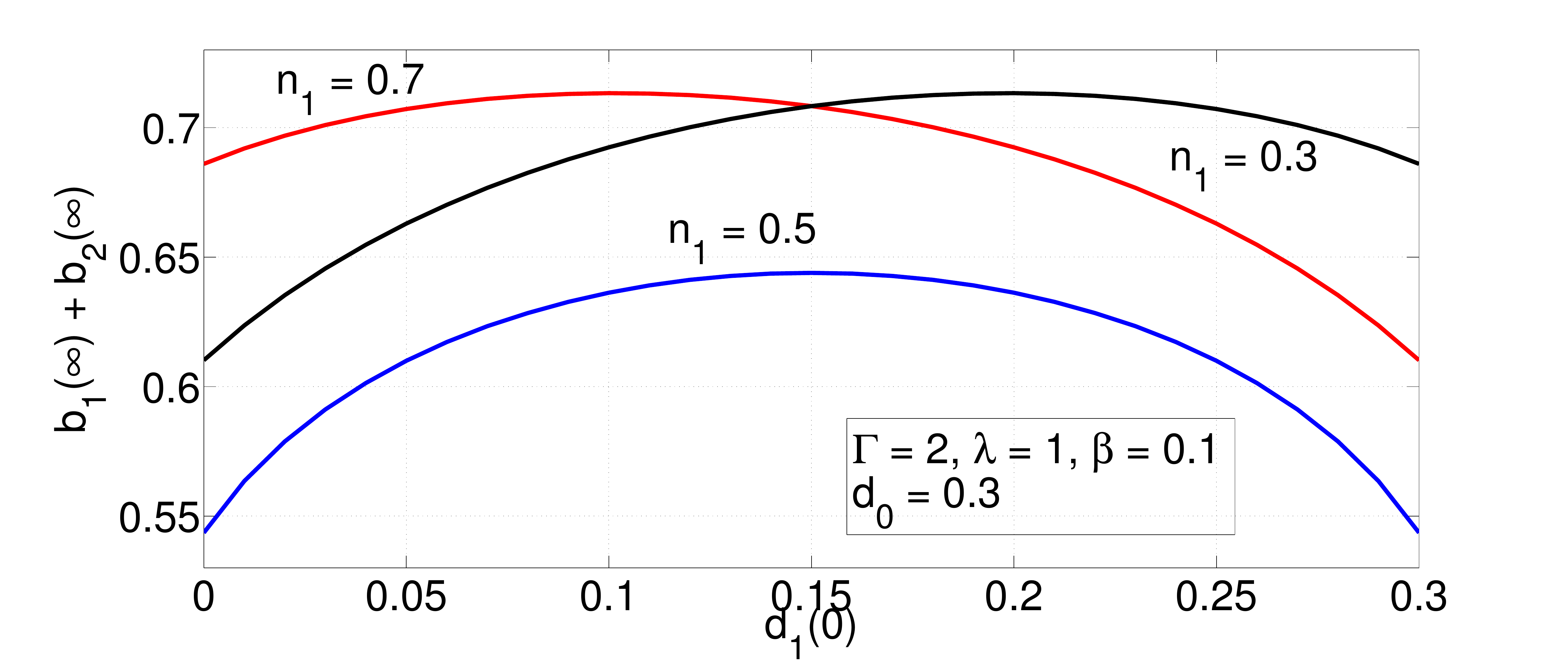}}
\caption{Total spread of influence $b_1(\infty) + b_2(\infty)$ for various allocations of initial seeding $(d_1(0), d_2(0))$ and different relative community sizes.}
\label{fig:twoclass_n1_variation}
\end{figure}

\section{Conclusion}
\label{sec:conclusion}
In this work, we began with a homogeneous version of the Linear Threshold model proposed by Kempe et al. \cite{kempe-etal03max-spread-infl} in the context of viral marketing, and generalized it for arbitrary threshold distributions. We observed that the spread of influence evolves as a discrete time Markov chain. Under a certain scaling, we showed that the scaled Markov chain converges (in the sense of \cite{kurtz70ode-markov-jump-processes}) to a deterministic trajectory defined by an o.d.e.. The threshold distribution appears in terms of its hazard rate function in this o.d.e. We described how this approach complements the fixed point equation suggested by Granovetter \cite{granovetter78threshold-models}, thus providing a link between two threads in the threshold model literature. Also, under the exponential distribution of threshold, we showed that the derived fluid dynamics are equivalent to the well-known SIR model in epidemiology. We also numerically demonstrated how incorporating the hazard function 
into the o.d.e. can provide qualitatively different characteristics compared to traditional epidemic models, even in a homogeneous setting. One of the interesting future directions is to incorporate the degree distribution of the underlying network in the fluid dynamics. Further, one can carry out a similar analysis for influence processes with a general threshold function (instead of linear), as indicated in \cite{kempe-etal03max-spread-infl}. Also, using the available social network data and via controlled experiments, one could validate or suggest improvements to the threshold model, in order to fit the real world dynamics. 

\bibliographystyle{unsrt}
\bibliography{tnse_hilt}

\begin{appendices}
\section{DTMC $(B(k), D(k))$}
\label{app:dtmc}
Let $\mathcal{F}_k$ denote the entire history of the processes up to time $k$, i.e., $\mathcal{F}_k = ( B(l),D(l) )_{l=0}^{l=k}$. To obtain the expected drift of $D(k)$, consider, 

\begin{eqnarray*}
\lefteqn{\mathbb{P} (D_{k+1} = l | \mathcal{F}_k )}\\
&=& \mathbb{P} (\sum_{j \notin B_{k}} I \{ b_j(B_{k}) < \theta_j \leq b_j(B_{k+1}) \} = l | \mathcal{F}_k )\\
&=& \hspace{-0.8cm} \sum_{\stackrel{L \subseteq N\backslash B_{k+1}}{|L| =l}} \prod_{j\in L}  \mathbb{P} (I \{ b_j(B_{k}) < \theta_j \leq b_j(B_{k+1}) \} = 1 | \mathcal{F}_k )  \times \\
& & \hspace{1cm} \prod_{j\notin L} \mathbb{P} (I \{ b_j(B_{k}) < \theta_j \leq b_j(B_{k+1}) \} = 0 | \mathcal{F}_k )\\
\end{eqnarray*}

Let $|B_k| = b$ and $|D_k| =d$, then we have in the HILT model $b_j(B_{k}) = \Gamma b$ and $b_j(B_{k+1}) = \Gamma (b+d)$. Hence we can write, 

\begin{eqnarray*}
\lefteqn{\mathbb{P} (D_{k+1} = l | \mathcal{F}_k )}\\
&=& \sum_{L \subseteq N\backslash B_{k+1},|L| =l} \prod_{j\in L}  \mathbb{P} (\theta_j \leq \Gamma (b+d) | \theta_j > \Gamma b ) \times \\
& & \hspace{2cm} \prod_{j\notin L} \mathbb{P} (\theta_j > \Gamma (b+d) | \theta_j > \Gamma b )\\
&=& \sum_{L \subseteq N\backslash B_{k+1}|L| =l} \bigg( \frac{F(\Gamma (b+d)) - F(\Gamma b)}{1-F(\Gamma b)}\bigg)^{l} \times \\
& & \hspace{1.8cm}\bigg(1- \frac{F(\Gamma (b+d)) - F(\Gamma b)}{1-F(\Gamma b)}\bigg)^{(N-b-d-l)}\\
&=& \binom{N-b-d}{l} \bigg( \frac{ F(\Gamma (b+d)) - F(\Gamma b)}{1-F(\Gamma b)} \bigg) ^l \times \\
& & \hspace{1.8cm} \bigg(1 - \frac{ F(\Gamma (b+d)) - F(\Gamma b)}{1-F(\Gamma b)} \bigg) ^{(N-b-d-l)}\\
&=& P(D(k+1) = l | B(k)= b, D(k) = d)\\
\end{eqnarray*}

From the above equations we can clearly see that $(B(k), D(k))$ is a DTMC on the state space $[0,1,\cdots ,N]\times[0,1, \cdots, N-B(k)]$.

\section{Scaling the HILT Model}
\label{app:scaling-kurtz}
In this section, we will demonstrate the necessity for a probabilistic scaling (in addition to the amplitude and time scaling) to arrive at the mean drift expressions. Let $\mathcal{F}_k$ denote the entire history of the processes up to time $k$, i.e., $\mathcal{F}_k = ( B(l),D(l) )_{l=0}^{l=k}$. Begin with the drift equations for the unscaled process $(B(k),D(k))$. 
\[ \mathbb{E} \bigg[ B(k+1) - B(k)| \mathcal{F}_k \bigg] = D(k) \]
\begin{eqnarray*}
\lefteqn{E \bigg[ D(k+1) - D(k)| \mathcal{F}_k \bigg]}\\
&=& - D(k)+ \\
& & \frac{F(\gamma(B(k)+D(k))) - F(\gamma B(k))}{1-F(\gamma B(k))} \times \\
& & \hspace{2cm}(N- D(k)- B(k))
\end{eqnarray*}
We shall now try scaling the process in the usual way, i.e.,by scaling down the amplitude by a factor of $N$, $\Tilde{B}^N(k) = \frac{B(k)}{N}$, $\Tilde{D}^N(k) = \frac{D(k)}{N}$. The evolution equations can then be written down as follows:
\[ E \bigg[ \Tilde{B}^N(k+1) - \Tilde{B}^N(k)| \mathcal{F}_{k} \bigg] = \Tilde{D}^N(k) \]
\begin{eqnarray*}
\lefteqn{E \bigg[ \Tilde{D}^N(k+1) - \Tilde{D}^N(k)| \mathcal{F}_{k} \bigg]}\\ 
&=& - \Tilde{D}^N(k)+ \\ 
& & \frac{F(\gamma N(\Tilde{B}^(k)+\Tilde{D}^N(k))) - F(\gamma N \Tilde{B}^N(k))}{1-F(\gamma N \Tilde{B}^N(k))} \times \\
& & \hspace{2cm}(1- \Tilde{D}^N(k) - \Tilde{B}^N(k))
\end{eqnarray*}
Using $d= \Tilde{D}^N(k)$, $b=\Tilde{B}^N(k)$ and $\Gamma = \gamma N$, we can write the drift function as, 
\[ f_N  = \bigg( d, \frac{F(\Gamma (b+d) - F(\Gamma b)}{1-F(\Gamma b)} (1-b-d) - d \bigg) \]
where both $d$ and $b$ are fractions taking values from $[0,1]$. It is clear that $\frac{f_N}{1/N}$ diverges with $N \rightarrow \infty$ but we want this quantity to converge to a function $f$ (which is independent of $N$) so that we can apply Kurtz's theorem to obtain an approximating ODE. We can see that the problem in the above case is caused because the drift function in the original process scales with the state. Hence in this case, while scaling, we need to slow down the process by another factor of $N$. To this purpose, we use the probabilistic attempt model in our scaling. The same scaling has been used in the literature in the context of the analysis of Random Multi-Access Algorithms by Bordenave et al.~\cite{bordenave05mean-fields}. Note that this modifies the dynamics of the original process. The o.d.e. will be the limit (in probability) of the stochastic process with modified dynamics as $N \rightarrow \infty$ but will be a heuristic approximation for the stochastic process with the original 
dynamics. 

\section{Proof of Theorem 1}
\label{app:theorem1}

Kurtz's theorem \cite{kurtz70ode-markov-jump-processes} provides us a way by which we can approximate the evolution of a pure jump Markov process by the solution of a derived ODE. In this paper we shall refer to \cite{darling02limits-purejump-markov} for an equivalent version of Kurtz's theorem, which is simpler to handle. It can be restated as follows to be directly used in our context.

\begin{theorem}
Given that,
\begin{itemize}
\item[(i)] $f(b,d)$ is Lipschitz

\item[(ii)] $\sup_{(b,d) \in \Delta^{(N)}} \bigg{|} \frac{f^{(N)}(b,d) }{\frac{1}{N}} - f(b,d) \bigg{|} \stackrel{N\rightarrow \infty}{\rightarrow} 0$ where $\Delta^{(N)} = [0,\frac{1}{N},\cdots ,N]\times[0,\frac{1}{N}, \cdots,1]$, with $b+d \leq 1$.

\item [(iii)] $E(| \Tilde{Y}^{(N)}(k)|^2 / \mathcal{F}_{k}) \leq \frac{C_0}{N^2}$ and  $E(| \Tilde{Z}^{(N)}(k)|^2 / \mathcal{F}_{k}) \leq \frac{C_1}{N^2}$ where \\ $\mathcal{F}_{k} = (B^{(N)}(0), D^{(N)}(0), \cdots, B^{(N)}(k), D^{(N)}(k))$ is the history of the process upto time $k$.

\item [(iv)] $\Tilde{B}^{(N)}(0) \stackrel{p}{\rightarrow} b(0)$ and $\Tilde{D}^{(N)}(k) \stackrel{p}{\rightarrow} d(0)$

\end{itemize}

then we have for each $T > 0$ and each $\epsilon > 0$,

\begin{eqnarray}
 P \bigg( \sup_{0 \leq t \leq T} \big| \big| \big( \Tilde{B}^{N}( \lfloor Nt \rfloor ),\Tilde{D}^{N}( \lfloor Nt \rfloor ) \big) - \big( b(t),d(t) \big) \big| \big| > \epsilon \bigg) &&  \nonumber \\
 & & \hspace{-2cm} \stackrel{N\rightarrow \infty}{\rightarrow} 0 \nonumber 
\end{eqnarray}

where $b(t)$ and $d(t)$ are defined as the solutions of the system of ODE,

\[\dot{b}(t) = f_1(b,d) \]
\[\dot{d}(t) = f_2(b,d) \]

with initial conditions $(b(0),d(0))$.

\end{theorem}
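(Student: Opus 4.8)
The plan is to treat the statement as an instance of the standard martingale-plus-Gronwall argument underlying Kurtz's theorem \cite{kurtz70ode-markov-jump-processes}, \cite{darling02limits-purejump-markov}, specialized to the two-dimensional scaled chain at hand. First I would write a single step of the scaled process as its conditional mean plus a mean-zero fluctuation, i.e.
\begin{align*}
&\big( \Tilde{B}^{N}(k+1), \Tilde{D}^{N}(k+1) \big) - \big( \Tilde{B}^{N}(k), \Tilde{D}^{N}(k) \big) \\
&\qquad = f^{(N)}\big( \Tilde{B}^{N}(k), \Tilde{D}^{N}(k) \big) + \big( \Tilde{Y}^{N}(k+1), \Tilde{Z}^{N}(k+1) \big),
\end{align*}
where $\Tilde{Y}^{N}, \Tilde{Z}^{N}$ are the fluctuation terms of hypothesis (iii). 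Telescoping from $0$ to $\lfloor Nt \rfloor$ then expresses the scaled state as its initial value, plus an accumulated drift $\sum_{j} f^{(N)}(\cdot)$, plus a \emph{martingale} $\mathcal{M}^{N}(t) = \sum_{j}\big( \Tilde{Y}^{N}(j+1), \Tilde{Z}^{N}(j+1) \big)$ adapted to $\mathcal{F}_{k}$.

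Next I would compare this with the integral form of the o.d.e., $\big(b(t),d(t)\big) = \big(b(0),d(0)\big) + \int_{0}^{t} f\big(b(s),d(s)\big)\, ds$, whose existence and uniqueness follow from the Lipschitz hypothesis (i) via Picard--Lindel\"of. Writing $E^{N}(t) := \big( \Tilde{B}^{N}(\lfloor Nt \rfloor), \Tilde{D}^{N}(\lfloor Nt \rfloor)\big) - \big(b(t),d(t)\big)$, I would bound $\|E^{N}(t)\|$ by four contributions: the initial discrepancy $\|E^{N}(0)\|$; the \emph{drift-approximation} error, since hypothesis (ii) forces $N f^{(N)} \to f$ uniformly on the simplex so that $\sum_{j} f^{(N)}(\cdot) = \tfrac{1}{N}\sum_{j} f(\cdot) + o(1)$; the Riemann-sum and floor discretization error incurred in replacing $\tfrac{1}{N}\sum_{j} f(\cdot)$ by $\int_{0}^{t} f(\cdot)\, ds$, which is $O(1/N)$ because $f$ is Lipschitz hence bounded on the compact simplex; and the martingale term $\sup_{s\le t}\|\mathcal{M}^{N}(s)\|$.

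The martingale term is where hypothesis (iii) enters. Its predictable quadratic variation over $[0,T]$ is a sum of $\lfloor NT \rfloor$ conditional second moments, each at most $C/N^2$, hence of order $T C / N \to 0$; Doob's $L^2$ maximal inequality then gives $\mathbb{E}\big[\sup_{s\le T}\|\mathcal{M}^{N}(s)\|^2\big] = O(1/N)$, so $\sup_{s\le T}\|\mathcal{M}^{N}(s)\| \to 0$ in probability. Assembling the pieces, the Lipschitz property (i) converts the drift comparison into the self-referential bound
\[ \sup_{u \le t} \|E^{N}(u)\| \;\le\; R^{N}(T) + L \int_{0}^{t} \sup_{u\le s}\|E^{N}(u)\|\, ds, \]
where $L$ is the Lipschitz constant and $R^{N}(T)$ collects the initial, approximation, discretization and martingale errors, all tending to $0$ in probability by (iv), (ii), boundedness, and (iii) respectively. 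A final application of Gronwall's inequality yields $\sup_{t\le T}\|E^{N}(t)\| \le R^{N}(T)\, e^{LT} \to 0$ in probability, which is precisely the claimed uniform convergence.

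I expect the \emph{main obstacle} to be the martingale estimate rather than the deterministic bookkeeping: one must confirm that the centred fluctuations genuinely satisfy the $O(1/N^2)$ second-moment bound of (iii) under the minislot probabilistic scaling (each infectious node acting with probability $1/N$ per slot, so each raw increment is a sum of nearly independent $\{0,1\}$ contributions of variance $O(1/N)$, i.e.\ $O(1/N^2)$ after the amplitude normalization), and then upgrade the resulting $L^2$ bound to a statement uniform over the $O(N)$ steps comprising the window $[0,T]$ via Doob. The Gronwall closure is routine once (i) supplies the Lipschitz constant, and existence and uniqueness of $(b,d)$ are immediate from (i); the delicate point is controlling the fluctuations \emph{uniformly in time} rather than pointwise.
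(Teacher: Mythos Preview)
Your sketch is a correct outline of the standard martingale--plus--Gronwall proof of Kurtz's theorem, and nothing in it is wrong. However, you should be aware that the paper does \emph{not} prove this statement at all: it is presented in the appendix explicitly as a restatement of Kurtz's theorem, taken from \cite{kurtz70ode-markov-jump-processes} and the equivalent formulation in \cite{darling02limits-purejump-markov}, and is simply quoted without proof. The actual work the paper does in that appendix is to \emph{verify} hypotheses (i)--(iv) for the specific HILT chain (computing the Jacobian to check Lipschitz continuity, checking $Nf^{(N)}\to f$, bounding the conditional variances of the binomial-type increments, and matching initial conditions), thereby establishing Theorem~1 as a corollary.

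So the comparison is: the paper treats the statement as a black-box citation and spends its effort on hypothesis verification for the model at hand, whereas you have re-derived the black box itself. Your approach is more self-contained and shows you understand \emph{why} the conditions are what they are (e.g.\ that (iii) is exactly what Doob's $L^2$ inequality needs to kill the fluctuation uniformly over $O(N)$ steps), at the cost of reproducing a known argument. If your goal were to match the paper, you would instead take the theorem as given and carry out the four verifications for the HILT drift $f_1=d$, $f_2=h_F(\Gamma b)\Gamma d(1-b-d)-d$.
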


\begin{itemize}
\item[(i)]\textbf{Lipschitz property}\\
Consider,
\[f_1(b,d)= d\]
\[ f_2(b,d) = \frac{ \Gamma d}{1 - \Gamma b} (1 - b - d) - d\]

\[ \frac{\partial f_1}{\partial b} = 0 ;  \frac{\partial f_1}{\partial d} =  1\]

\[ \frac{\partial f_2}{\partial b} = \frac{\Gamma (d( \Gamma - 1) - d^2 \Gamma)}{(1-\Gamma b)^2}; \frac{\partial f_2}{\partial d} = \frac{ \Gamma -1}{1 - \Gamma b} - \frac{2d \Gamma}{1 - \Gamma b}\]

We see that each of the terms above is bounded when $(b,d) \in [0,1]\times[0,1-b]$. Thus the norm of Jacobian $|| Df(b,d) ||$ is uniformly bounded, and it follows that $f(b,d) = (f_1(b,d),f_2(b,d))$ is Lipschitz.

\item[(ii)]\textbf{Uniform Convergence}\\
\begin{eqnarray*}
f^{N}(b,d) = \frac{1}{N} \bigg(d, \frac{N \gamma d}{1-N \gamma b} (1-b-d) - d \bigg) 
\end{eqnarray*}

\begin{eqnarray*}
f(b,d) = \bigg(d,\frac{\Gamma d}{1-\Gamma b} (1-b- d) - d \bigg)
\end{eqnarray*}

By definition, $\Gamma = N \gamma$ and hence the uniform convergence of $\frac{f^{N}(b,d)}{1/N}$ to $f(b,d)$ in the domain $(b,d) \in [0,\frac{1}{N},\cdots ,N]\times[0,\frac{1}{N}, \cdots,b]$ is straightforward. 

\item[(iii)]\textbf{Bounded Noise variance}\\
We can write the noise variances as follows:
\begin{eqnarray*}
E(| Y^{N}(k)|^2 | \mathcal{F}_{k}) &=& \frac{1}{N}(1-\frac{1}{N})D^{N}(k) \\
&\leq & \Tilde{D}^{N}(k) \leq 1\\
\end{eqnarray*}

\begin{eqnarray*}
\lefteqn{E(| Z^{N}(k)|^2 | \mathcal{F}_{k})=}\\
& & \hspace{-1cm} \frac{1}{N}(1-\frac{1}{N})D^{N}(k) + \sum_{D^{\star N}(k)=0}^{D^{N}(k)} \bigg( \frac{{D}^{\star N}(k) \gamma}{1-\gamma B^{N}(k)} \\
& & \hspace{-1cm} \times \big( 1 - \frac{{D}^{\star N}(k) \gamma}{1-\gamma B^{N}(k)} \big) (N - B^{N}(k) - D^{N}(k)) \\
& & \hspace{-1cm} \times \binom{D^{N}(k)}{{D}^{\star N}(k)} \big( \frac{1}{N} \big)^{{D}^{\star N}(k)} \big(1- \frac{1}{N} \big)^{D^{N}(k)-{D}^{\star N}(k)} \bigg) \\
& \leq & \Tilde{D}^{N}(k) + \frac{\Gamma (1-\Tilde{B}^{N}(k)-\Tilde{D}^{N}(k))\Tilde{D}^{N}(k)}{1-\Gamma \Tilde{B}^{N}(k)}\\
\end{eqnarray*}

where ${D}^{\star N}(k)$ represents the number of nodes that succeed in contributing their influence, at the mini slot $k$. Since $\Tilde{D}^{N}(k)$,$\Tilde{B}^{N}(k)$ and $\Gamma$ are less than or equal to 1, both the above terms can be bounded above by constants $C_1$ and $C_2$. Hence in the process involving fraction of nodes, we have

\begin{eqnarray*}
E(| \Tilde{Y}^{N}(k)|^2 | \mathcal{F}_{k}) &=& \frac{1}{N^2} E(| Y^{N}(k)|^2 | \mathcal{F}_{k})\\
&\leq &  \frac{C_1}{N^2}\\
\end{eqnarray*}

\begin{eqnarray*}
E(| \Tilde{Z}^{N}(k)|^2 | \mathcal{F}_{k}) &=& \frac{1}{N^2} E(| Z^{N}(k)|^2 | \mathcal{F}_{k})\\
&\leq &  \frac{C_2}{N^2}\\
\end{eqnarray*}
and we can see that the noise conditions are satisfied.

\item[(iv)]\textbf{Convergence of initial conditions}\\
By choice, we have $\Tilde{B}^{N}(0) = b(0)$ and $\Tilde{D}^{N}(0)=d(0)$. 

\end{itemize}

Thus by Kurtz's theorem, we have for each $T > 0$ and each $\epsilon > 0$,

\begin{eqnarray}
 P \bigg( \sup_{0 \leq t \leq T} \big| \big| \big( \Tilde{B}^{N}( \lfloor Nt \rfloor ),\Tilde{D}^{N}( \lfloor Nt \rfloor ) \big) - \big( b(t),d(t) \big) \big| \big| > \epsilon \bigg) &&  \nonumber \\
 & & \hspace{-2cm} \stackrel{N\rightarrow \infty}{\rightarrow} 0 \nonumber 
\end{eqnarray}
 
where $(b(t),d(t))$ is the unique solution of the o.d.e..

\[\dot{b}(t) = d(t) \]
\[\dot{d(t)} = -d(t) + \frac{\Gamma d(t)}{1 - \Gamma b(t)} (1 - b(t) - d(t))\]

with initial conditions $(b(0)=0,d(0)=a(0))$.

\begin{flushright}
$\blacksquare$
\end{flushright}

\section{Solution of the o.d.e.}
\label{app:ode-solving}
Recall the system of o.d.e.s for the evolution of influence under the uniform distribution is given by, 
\[ \dot{b} = d\]
\[ \dot{d} = -d + \frac{\Gamma d}{1 - \Gamma b} (1 -b - d) \]
Substituting for $d$ in the second equation and simplifying, we get
\[\ddot{b} = \frac{\Gamma \dot{b} -\dot{b} -\Gamma \dot{b}^2}{1-\Gamma b}\]
Note that,
\begin{eqnarray*}
\ddot{b} &=& \frac{\mathrm{d}d}{\mathrm{d}t} = \frac{\mathrm{d}d}{\mathrm{d}b} \times \frac{\mathrm{d}b}{\mathrm{d}t}\\
         &=& d \frac{\mathrm{d}d}{\mathrm{d}b}\\
\end{eqnarray*}
Hence,
\[ d\frac{\mathrm{d}d}{\mathrm{d}b} = \frac{\Gamma \dot{b} -\dot{b} -\Gamma \dot{b}^2}{1-\Gamma b}\]
By separating the variables,
\[ \frac{\mathrm{d}d}{\Gamma -1-\Gamma d} = \frac{\mathrm{d}b}{1-\Gamma b} \]
Integrating on both sides, and after taking anti-logarithm
\[ \Gamma - 1 - \Gamma d = c_1 (1-\Gamma b) \]
Differentiating on both sides yields $\dot{d} = c_1 d$ and hence $d(t) = c_2 e^{c_1 t}$. Substituting in the above equation for $d(t)$ we get 
\[b(t) = \frac{c_2}{c_1} e^{c_1 t} + \frac{1+ c_1 - \Gamma}{\Gamma c_1} \]
Solving for constants using the initial conditions $b(0)=0$, $d(0)=d_0$ results in $ c1= -(1 + \Gamma d_0 -\Gamma) =: -r$ and $c_2=d_0$.
Hence we have,
\[b(t) = \frac{d_0}{r} - \frac{d_0}{r} e^{-rt} \]
\[d(t) = d_0 e^{-rt} \]
\begin{flushright}
$\blacksquare$
\end{flushright}

\section{Convergence of $h_{\gamma}^{(N)}(k)$ to $b_\infty$}
\label{app:hilt-convergence}
The solution of the o.d.e. suggests that $\lim_{t \to \infty} b(t) = d_0/r$. This is consistent with the fact that $\lim_{t \to \infty} \frac{h_{\gamma}^{(N)}(k)}{N} \rightarrow \frac{d_0}{1 - (1-d_0)\Gamma} = b_\infty$ as we now proceed to show.
\[ h_{\gamma}^{(N)}(k) = k \bigg[ 1 + (N-k)\gamma \bigg[ 1 + (N-k-1)\gamma \bigg[ 1 + \cdots \]
\[ h_{\gamma}^{(N)}(k+1) = (k+1) \bigg[ 1 + (N-k-1)\gamma \bigg[ 1 + \cdots \]
Thus we can write,
\[ h_{\gamma}^{(N)}(k) = k \bigg[ 1 + \gamma (N-k) \frac{h_{\gamma}^{(N)}(k+1)}{k+1} \bigg] \]
Now substituting for $k=d_0 N$ and noting that $\Gamma = \gamma N$ we have
\[  \frac{h_{\gamma}^{(N)}(N d_0)}{N d_0} = 1 + \Gamma (1-d_0) \frac{h_{\gamma}^{(N)}(N(d_0 + \frac{1}{N}))}{N(d_0 + \frac{1}{N})} \]
Taking $N \rightarrow \infty$ and noting that $h_{\gamma}^{(N)}(k)$ is a continuous function, we have
\[  \frac{1}{d_0} \frac{h_{\gamma}^{(N)}(N d_0)}{N} = 1 + \frac{\Gamma (1-d_0)}{d_0} \frac{h_{\gamma}^{(N)}(N d_0)}{N} \]
Take limits on both sides, and solving for the unknown, 
\[\frac{h_{\gamma}^{(N)}(N d_0)}{N} \rightarrow \frac{d_0}{1 - (1-d_0)\Gamma} = b_\infty\]
\begin{flushright}
$\blacksquare$
\end{flushright}
\end{appendices}

%% References with bibTeX database:

%% Authors are advised to submit their bibtex database files. They are
%% requested to list a bibtex style file in the manuscript if they do
%% not want to use elsarticle-harv.bst.

%% References without bibTeX database:

% \begin{thebibliography}{00}

%% \bibitem must have one of the following forms:
%%   \bibitem[Jones et al.(1990)]{key}...
%%   \bibitem[Jones et al.(1990)Jones, Baker, and Williams]{key}...
%%   \bibitem[Jones et al., 1990]{key}...
%%   \bibitem[\protect\citeauthoryear{Jones, Baker, and Williams}{Jones
%%       et al.}{1990}]{key}...
%%   \bibitem[\protect\citeauthoryear{Jones et al.}{1990}]{key}...
%%   \bibitem[\protect\astroncite{Jones et al.}{1990}]{key}...
%%   \bibitem[\protect\citename{Jones et al., }1990]{key}...
%%   \harvarditem[Jones et al.]{Jones, Baker, and Williams}{1990}{key}...
%%

% \bibitem[ ()]{}

% \end{thebibliography}

\end{document}